\let\tr\undefined
\definecolor{zx_grey}{RGB}{211,211,211}
\tikzstyle{Green Node}=[minimum size=5mm, font={\footnotesize\boldmath}, shape=rectangle, rounded corners=2mm, inner sep=0.2mm, outer sep=-2mm, scale=0.8, tikzit shape=circle, draw=black, fill={rgb,255: red,216; green,248; blue,216}, tikzit fill={rgb,255: red,216; green,248; blue,216}]
\tikzstyle{Empty green}=[inner sep=0mm, minimum size=2mm, shape=circle, draw=black, fill={rgb,255: red,216; green,248; blue,216}]
\tikzstyle{Red Node}=[minimum size=5mm, font={\footnotesize\boldmath}, shape=rectangle, rounded corners=2mm, inner sep=0.2mm, outer sep=-2mm, scale=0.8, tikzit shape=circle, draw=black, fill={rgb,255: red,232; green,165; blue,165}, tikzit fill={rgb,255: red,232; green,165; blue,165}]
\tikzstyle{Empty red}=[inner sep=0mm, minimum size=2mm, shape=circle, draw=black, fill={rgb,255: red,232; green,165; blue,165}]
\tikzstyle{wh}=[minimum size=5mm, font={\footnotesize\boldmath}, shape=rectangle, rounded corners=2mm, inner sep=0.2mm, outer sep=-2mm, scale=0.8, tikzit shape=circle, draw=black, fill={rgb,255: red,255; green,255; blue,255}, tikzit fill={rgb,255: red,255; green,255; blue,255}]
\tikzstyle{h}=[fill=white, draw=black, shape=rectangle, inner sep=0.6mm, minimum height=1.5mm, minimum width=1.5mm]
\tikzstyle{AutoCCZ}=[minimum size=6mm, font={\footnotesize}, fill=white, draw=black, shape=rectangle, scale=0.8]
\tikzstyle{txt}=[minimum size=5mm, font={\footnotesize\boldmath}, shape=rectangle, rounded corners=2mm, inner sep=0.2mm, outer sep=-2mm, scale=0.8, tikzit shape=circle, draw=none, fill=none, tikzit draw=none]
\tikzstyle{bigtxt}=[minimum size=5mm, font={\boldmath}, shape=rectangle, rounded corners=2mm, inner sep=0.2mm, outer sep=-2mm, scale=0.8, tikzit shape=circle, draw=none, fill=none, tikzit draw=none]
\tikzstyle{testt}=[fill=white, draw=black, shape=circle]
\tikzstyle{txtnobold}=[minimum size=5mm, font={\footnotesize}, shape=rectangle, rounded corners=2mm, inner sep=0.2mm, outer sep=-2mm, scale=0.8, tikzit shape=circle, draw=none, fill=none, tikzit draw=none]
\tikzstyle{divide}=[regular polygon, regular polygon sides=3, draw=black, fill={rgb,255: red,211; green,211; blue,211}, inner sep=2.3pt, tikzit category=scal, rounded corners=0.8mm]
\tikzstyle{black}=[fill=black, draw=black, shape=circle, tikzit fill=black, tikzit draw=black, tikzit shape=circle, tikzit category=IH, inner sep=2pt]
\tikzstyle{gather}=[fill={rgb,255: red,211; green,211; blue,211}, draw=black, tikzit category=scal, rounded corners=0.8mm, regular polygon, regular polygon sides=3, inner sep=2.3pt, rotate=180]
\tikzstyle{new style 0}=[fill=white, draw=black, shape=circle]
\tikzstyle{morphism}=[fill=white, draw=black, shape=rectangle, minimum height=5mm, font={\footnotesize}]
\tikzstyle{gnd}=[gnd]
\tikzstyle{hadamard edge}=[-, dashed, dash pattern=on 2pt off 1pt, thin, draw={rgb,255: red,190; green,190; blue,190}]
\tikzstyle{edge}=[-, thick]
\tikzstyle{delay}=[-, fill={rgb,255: red,232; green,165; blue,165}, tikzit fill={rgb,255: red,232; green,165; blue,165}]
\tikzstyle{delay2}=[-, draw=black, fill={rgb,255: red,216; green,248; blue,216}, tikzit draw=black, tikzit fill={rgb,255: red,216; green,248; blue,216}]
\tikzstyle{arrow}=[->]
\tikzstyle{distribution}=[-, fill={rgb,255: red,116; green,179; blue,231}, tikzit fill={rgb,255: red,37; green,77; blue,255}]
\tikzstyle{discontinuous}=[-, dashed, dash pattern=on 4pt off 1pt, thin, draw=black, fill=black, tikzit fill=black, tikzit draw=black]
\tikzset{
  probabilistic split/.style = {trapezium, draw},
  probabilistic join/.style = {trapezium, draw, shape border rotate=180},
  gate/.style = {rectangle, draw, minimum height = 4ex, minimum width = 1.8em}
}
\newcommand*{\relrelbarsep}{.386ex}
\newcommand*{\relrelbar}{%
  \mathrel{%
    \mathpalette\@relrelbar\relrelbarsep
  }%
}
\newcommand*{\@relrelbar}[2]{%
  \raise#2\hbox to 0pt{$\m@th#1\relbar$\hss}%
  \lower#2\hbox{$\m@th#1\relbar$}%
}
\providecommand*{\rightrightarrowsfill@}{%
  \arrowfill@\relrelbar\relrelbar\rightrightarrows
}
\providecommand*{\leftleftarrowsfill@}{%
  \arrowfill@\leftleftarrows\relrelbar\relrelbar
}
\providecommand*{\xrightrightarrows}[2][]{%
  \ext@arrow 0359\rightrightarrowsfill@{#1}{#2}%
}
\providecommand*{\xleftleftarrows}[2][]{%
  \ext@arrow 3095\leftleftarrowsfill@{#1}{#2}%
}
\newsavebox{\@brx}
\newcommand{\llangle}[1][]{\savebox{\@brx}{\(\m@th{#1\langle}\)}%
  \mathopen{\copy\@brx\kern-0.5\wd\@brx\usebox{\@brx}}}
\newcommand{\rrangle}[1][]{\savebox{\@brx}{\(\m@th{#1\rangle}\)}%
  \mathclose{\copy\@brx\kern-0.5\wd\@brx\usebox{\@brx}}}
\def\@seccntformat#1{\@ifundefined{#1@cntformat}%
   {\csname the#1\endcsname\quad}  
   {\csname #1@cntformat\endcsname}
}
\let\oldappendix\appendix 
\renewcommand\appendix{%
    \oldappendix
    \newcommand{\section@cntformat}{\appendixname~\thesection\quad}
}
   \newcommand\estyle{}%
   \renewcommand\institute[1]%
     {\\\multicolumn{#1}{@{}c@{}}{\scriptsize\begin{tabular}[t]{@{}>{\footnotesize}c@{}}##1\end{tabular}}}%
   \renewcommand\email[1]%
     {\gdef\estyle{\footnotesize\ttfamily}\\##1\gdef\estyle{}}
\begin{document}
\title{Enriching Diagrams with Algebraic Operations}

\author{Alejandro Villoria \and
Henning Basold \and
Alfons Laarman }
\authorrunning{Villoria et al.}
%
\institute{Leiden Institute of Advanced Computer Science, The Netherlands\\
  \email{\{a.d.villoria.gonzalez, h.basold, a.w.laarman\}@liacs.leidenuniv.nl}
  }
\maketitle
\begin{abstract}
  In this paper, we extend diagrammatic reasoning in monoidal categories with algebraic
  operations and equations.
  We achieve this by considering monoidal categories that are enriched in the category
  of Eilenberg-Moore algebras for a monad.
  Under the condition that this monad is monoidal and there is an adjunction
  between the free algebra functor and the underlying category functor,
  we construct an adjunction between
  symmetric monoidal categories and symmetric monoidal categories enriched over
  algebras for the monad.
  This allows us to devise an extension, and its semantics, of the ZX-calculus with probabilistic
  choices by freely enriching over convex algebras, which are the algebras of the
  finite distribution monad.
  We show how this construction can be used for diagrammatic reasoning of noise in quantum systems.
\end{abstract}
\section{Introduction}
\label{sec:introduction}

Monoidal categories are one way of generalizing algebraic reasoning and they can be used to draw
intuitive diagrams that encapsulate this reasoning graphically.
That monoidal categories are a powerful abstraction has been demonstrated in countless areas,
such as linear logic~\cite{dePaiva14:CategoricalSemanticsLinear} or
quantum mechanics~\cite{abramsky-categorical-2007}, just to name a few, and are amenable to
graphical reasoning~\cite{Selinger11:SurveyGraphicalLanguages} with diagrammatic languages such
as the ZX-calculus~\cite{cd11}.
Another abstraction of algebraic reasoning are monads~\cite{Barr85:ToposesTriples,%
  Moggi91:NotionsComputationMonads,PP02:NotionsComputationDetermine}
and their algebras, or
representations thereof~\cite{HP07:CategoryTheoreticUnderstanding,Manes76:AlgebraicTheories},
which are distinct from monoidal
categories in that identities (like associativity) always hold strictly and they allow rather
arbitrary algebraic operations.
In this paper, we set out to combine these two approaches into one framework, in which
monoidal category diagrams can be composed not only sequentially and in parallel with a tensor product
but also with additional algebraic operations.

\begin{figure}[ht]
  \centering
  \vspace*{-5mm}
  \begin{tikzpicture}[bend angle=20, x=0.5cm, y=0.5cm]
    \node[probabilistic split] (s) {};
    \node[gate, below left = 0.9 and 0.3 of s] (g) {$G$};
    \node[gate, below right = 0.9 and 0.3 of s] (e) {$E$};
    \node[probabilistic join, below right = 0.9 and 0.3 of g] (j) {};

    \coordinate[above = 0.3 of s] (i);
    \coordinate[below = 0.3 of j] (o);

    \node[right = 0.1 of i] {$A$};
    \node[right = 0.1 of o] {$B$};

    \draw (s) to[bend right] node[left] {$0.9$} (g);
    \draw (g) to[bend right] (j);
    \draw (s) to[bend left] node[right] {$0.1$} (e);
    \draw (e) to[bend left] (j);
    \draw (i) -- (s);
    \draw (j) -- (o);
  \end{tikzpicture}
  \hspace*{2cm}
  \begin{tikzpicture}[bend angle=20, x=0.5cm, y=0.5cm]
    \node[probabilistic split] (s) {};
    \node[gate, below left = 0.9 and 0.3 of s] (g1) {$G_{1}$};
    \node[gate, below right = 0.9 and 0.3 of s] (e) {$E$};
    \node[probabilistic join, below right = 0.9 and 0.3 of g] (j) {};

    \coordinate[above = 0.3 of s] (i1);
    \coordinate[below = 0.3 of j] (o1);

    \node[right = 0.1 of i1] {$A$};
    \node[right = 0.1 of o1] {$B$};

    \draw (s) to[bend right] node[left] {$0.9$} (g1);
    \draw (g1) to[bend right] (j);
    \draw (s) to[bend left] node[right] {$0.1$} (e);
    \draw (e) to[bend left] (j);
    \draw (i1) -- (s);
    \draw (j) -- (o1);

    \node[right = 0.1 of e] (tens) {$\otimes$};

    \node[gate, right = 0.1 of tens] (g2) {$G_{2}$};
    \path
      let \p1 = (i1), \p2 = (o1), \p3 = (g2)
      in
        coordinate (i2) at (\x3,\y1)
        coordinate (o2) at (\x3,\y2)
    ;
    \draw (i2) -- (g2);
    \draw (g2) -- (o2);

    \node[right = 0.1 of i2] {$C$};
    \node[right = 0.1 of o2] {$D$};

    \node[right = 0.1 of g2] (eq) {$=$};

    \node[gate, right = 0.1 of eq] (g12) {$G_{1}$};
    \node[gate, right = 0.1 of g12] (g21) {$G_{2}$};
    \node[gate, right = 0.1 of g21] (e2) {$E$};
    \node[gate, right = 0.1 of e2] (g22) {$G_{2}$};
    \node[probabilistic split, above right = 0.9 and 0 of g21 ] (s2) {};
    \node[probabilistic join, below right = 0.9 and 0 of g21] (j2) {};

    \coordinate[above = 0.3 of s2] (i3);
    \coordinate[below = 0.3 of j2] (o3);

    \node[right = 0.1 of i3] {$A \otimes C$};
    \node[right = 0.1 of o3] {$B \otimes D$};

    \coordinate[above right = 0.5 and 0 of g12] (ms1);
    \coordinate[above left = 0.5 and 0 of g22] (ms2);
    \draw[double] (s2) -- node[left, inner sep = 10pt, pos = 0.1] {$0.9$} (ms1);
    \draw (ms1) to[bend right] node[left, inner sep = 4pt] {$A$} (g12);
    \draw (ms1) to[bend left] node[right, inner sep = 4pt] {$C$} (g21);
    \draw[double] (s2) -- node[right, inner sep = 10pt, pos = 0.1] {$0.1$} (ms2);
    \draw (ms2) to[bend right] node[left, inner sep = 4pt] {$A$} (e2);
    \draw (ms2) to[bend left] node[right, inner sep = 4pt] {$C$} (g22);

    \coordinate[below right = 0.5 and 0 of g12] (mj1);
    \coordinate[below left = 0.5 and 0 of g22] (mj2);
    \draw[double] (mj1) -- (j2);
    \draw[double] (mj2) -- (j2);
    \draw (g12) to[bend right] node[left, inner sep = 4pt] {$B$} (mj1);
    \draw (g21) to[bend left] node[right, inner sep = 4pt] {$D$} (mj1);
    \draw (e2) to[bend right] node[left, inner sep = 4pt] {$B$} (mj2);
    \draw (g22) to[bend left] node[right, inner sep = 4pt] {$D$} (mj2);

    \draw[double] (s2) -- (i3);
    \draw[double] (j2) -- (o3);
  \end{tikzpicture}
  \caption{\textbf{Left:} Probabilistic mix of a gate $G$ with an error $E$.
    \textbf{Right}: Interaction of tensor and convex sum, where double wires visually indicate a
    tensor product}
  \label{fig:probabilistic-mix}
\end{figure}
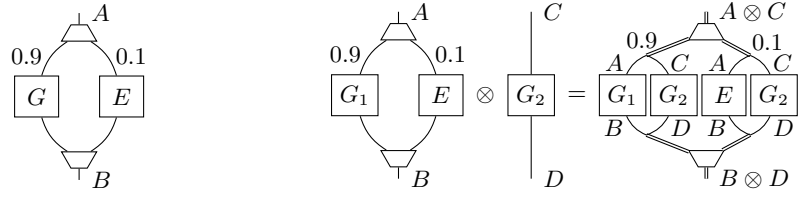
One such operation is the formation of convex combinations, which can be used to create
a probabilistic mix of two or more diagrams.
This occurs, for instance, when reasoning about the behaviour of noise in quantum circuits.
\Cref{fig:probabilistic-mix} shows on the left two quantum logic gates, one called $G$ and one called $E$ that,
respectively, model the wanted behaviour and a possible error.
These two gates are mixed, where $G$ gets a probability of $0.9$ and $E$ of $0.1$.
The trapezoids delimit the combination of the gates, and $A$ and $B$ are the input and output types
of the gates\footnote{We read diagrams from top to bottom.}.
In monoidal categories, the gates in the picture represent morphisms $G, E \from A \to B$
and our aim is to interpret the trapezoid block as a convex sum $G +_{0.9} E$ of these
morphisms, where we define $G +_{p} E = p G + (1-p) E$.
Such sums should also nicely interact with the tensor product.
For instance, if $G_{1} \from A \to B$ and $G_{2} \from C \to D$ are gates, then an identity such
as $(G_{1} +_{0.9} E) \otimes G_{2} = (G_{1}  \otimes G_{2}) +_{0.9} (E \otimes G_{2})$
should hold for these morphisms of type $A \otimes C \to B \otimes D$, see
\Cref{fig:probabilistic-mix} on the right.
Having an operation to form convex combinations together with intuitive identities enables
reasoning about, for example, probabilistic combination and noise in quantum circuits.

The difficulty lies in combining monoidal diagrams with algebraic operations such that the
algebraic identities and the monoidal identities interact coherently.
We will handle this difficulty by using enriched monoidal categories, where the enrichment
yields the algebraic operations and the monoidal structure the parallel composition.
More precisely, we will assume that the algebraic theory is given by a monad $T$ and that the
monoidal categories are enriched over the Eilenberg-Moore category $\emt$ of algebras for this
monad.
Our aim in this paper is to construct for an arbitrary monoidal category $\CC$ an $\emt$-enriched
monoidal category $F\CC$ that is free in the sense that there is an inclusion
$\iota_{\CC} \from \CC \to (F\CC)_0$ into the underlying category of $F\CC$ and for every
$\emt$-enriched monoidal category $\eD$ and monoidal functor $G \from \CC \to \eD_{0}$, there is a
unique $\emt$-enriched monoidal functor $\bar{G}_0 \from F\CC \to \eD$ that makes the following
diagram commute.
\begin{equation*}
\begin{tikzcd}
	{(F\CC)_0} & {\eD_0} \\
	\CC
	\arrow["G"', from=2-1, to=1-2]
	\arrow["{\iota_{\CC}}", from=2-1, to=1-1]
	\arrow["{\bar{G}_0}", from=1-1, to=1-2]
\end{tikzcd}
\end{equation*}
This free construction does not work for all monads, but we show that the free enrichment
always exists for monoidal $\SetC$-monads whose free $T$-algebra functor is left adjoint to
the underlying category functor $\emt(I,-):\emt\to\SetC$ for $I$ the monoidal unit of $\emt$.

\paragraph{Contributions}\mbox{}\\
Specifically, we contribute a construction for free enrichment over algebras for some monoidal
monads in~\Cref{thm:underlyingAdjoint} and~\Cref{corollary:free-forget}.
We also show how the enrichment preserves symmetric monoidal structure
in~\Cref{thm:monoidal-enrichment} and~\Cref{co:free}.
Given this construction, we demonstrate how a graphical language for reasoning in monoidal
categories can be enriched with the free algebras for these monads, which enables diagrammatic reasoning of
the interaction between the sequential and parallel compositions with the algebraic structure.
We show how the theory can be applied to obtain convex combinations of ZX-diagrams and what the
resulting identities of diagrams are.
By exploiting the mapping property of the free enrichment, we automatically obtain sound
interpretations of these operations and identities.
Lastly, we describe how we can use the enrichment of ZX-diagrams to reason about noise in quantum systems.

\paragraph{Related Work}\mbox{}\\
ZX-diagrams are \emph{universal} in the sense that they can in principle represent any linear map
between Hilbert spaces of dimension $\bC^{2^n}$~\cite{cd11}.
Indeed, sums and linear/convex combinations~\cite{jeandel-addition-2023,shaikh-how-2022}
of ZX-diagrams can be encoded within the language, but in practice these representations
oftentimes lead to either very large diagrams or to diagrams that do not reveal upon visual
inspection the (linear/convex) structure that the diagram is representing.
This, in return, diminishes the advantages gained by reasoning in terms of abstract graphical
representations.
Our perspective of using enrichment keeps the abstraction barrier and thus makes reasoning about
convex combinations of diagrams tractable.
In general, our theory also covers the recently developed linear combinations of ZX-diagrams such as
\cite{stollenwerk-diagrammatic-2022,muuss-thesis}
and other, so far unexplored, algebraic operations such as those of join-semilattices.
Moreover, the identities that have to be crafted carefully by hand and proven to be sound
fall automatically out of our theory.
Other related work is that of \emph{Sheet diagrams}~\cite{comfort-sheet-2020}
and \emph{Tape diagrams}~\cite{bonchi-deconstructing-2022},
recently developed graphical languages for rig categories,
which are categories with two monoidal structures -- one for addition and one for multiplication.

\paragraph{Outline}\mbox{}\\
The paper is organised as follows.
We start by introducing notation and recalling some background of enriched and
monoidal categories in \Cref{sec:background}.
In \Cref{sec:algebras}, we establish the necessary theory to define categories enriched
over Eilenberg-Moore algebras and we construct a free enrichment over those algebras.
Our next step in \Cref{sec:v-monoidal} is to extend these definitions and the free construction to
also include monoidal structures on categories, which ensures that these enrichment and monoidal
structure coherently interact.
\Cref{sec:zx} is devoted to applying our theory to enrich ZX-diagrams with convex sums to reason
about probabilistic processes such as quantum noise.
We conclude the paper with directions for future work in \Cref{sec:discussion}.


\section{Background}
\label{sec:background}

In this section, we recall some terminology from category theory~\cite{Borceux94:HandbookCategoricalAlgebraBasic,%
  Leinster14:BasicCategoryTheory,maclane,Riehl16:CatTheoryinContext} and introduce some notation.
We denote the collection of objects of a category \textbf{C} as $|\mathbf{C}|$,
and the morphisms from object $A$ to $B$ as $\mathbf{C}(A,B)$.
A \textit{monoidal category} $(\mathbf{C},\otimes,\mathit{I})$ is a category \textbf{C} together with
a functor $\otimes:\mathbf{C} \times \mathbf{C} \to \mathbf{C}$ called the \textit{tensor product}
and an object $\mathit{I} \in |\mathbf{C}|$ called the \textit{tensor unit} subject
to some conditions~\cite{Kelly82:ConceptsEnrichedCatTheory}.
We will often refer to a monoidal category $(\mathbf{C},\otimes,\mathit{I})$ as just \textbf{C}.
A monoidal category is a \textit{symmetric monoidal} category (SMC) when it also has a
\textit{braiding} $\sigma_{A,B}: A\otimes B \to B\otimes A$ such that
$\sigma_{B,A}\circ \sigma_{A,B} = \text{Id}_{A\otimes B}$
that is also subject to coherence conditions~\cite{Kelly82:ConceptsEnrichedCatTheory}.

Given a monoidal category $(\VC, \times, \timesUnit)$, a $\VC$-\emph{(enriched) category} $\eC$ consists
of
\begin{itemize}
\item a class $\obj{\eC}$ of objects,
\item for each pair $A,B \in \obj{\eC}$, an object $\eC(A,B) \in \obj{\VC}$ that we refer to as the
  \emph{hom-object},
\item for objects $A,B,C \in \obj{\eC}$, a \emph{composition} morphism
  $\enr{\circ} \from \eC(B,C) \times \eC(A,B) \to \eC(A,C)$
  in $\mathbf{V}$, and
\item for all $A \in \obj{\eC}$, an \emph{identity element} $j_A \from  \timesUnit \to \eC(A,A)$
\end{itemize}
subject to associativity and unit axioms~\cite{Kelly82:ConceptsEnrichedCatTheory}.
We say that $\VC$ is the \emph{base of enrichment} for $\eC$.
A way to look at the above definition is that we construct a $\VC$-enriched category
$\eC$ by identifying morphisms of some category $\mathbf{C}$ as objects from
$\VC$, which we are able to compose by using the tensor product of $\VC$.
The most well-known example is that of \emph{locally small} categories, in which the morphisms
between two objects form a set, and thus we can see them as objects in the monoidal category
($\mathbf{Set},\times,\timesUnit$) for $\times$ the Cartesian product and $\timesUnit$ the singleton
set.

With a suitable definition of $\VC$-functors and $\VC$-natural transformations, $\VC$-categories
organise themselves into a 2-category~\cite{Kelly82:ConceptsEnrichedCatTheory}, denoted by $\VCat$.
For an SMC $(\VC, \times, \timesUnit)$, $\VCat$ is also an SMC as follows.
We define for $\VC$-categories $\eC$ and $\eD$ a $\VC$-category $\eC \otimes \eD$ with objects
$\obj{\eC \otimes \eD}$ being the categorical product and hom-objects
$(\eC \otimes \eD)((A, B), (C, D)) = \eC(A, C) \times \eD(B, D)$.
The unit is given by
$j_{(A, B)}
= \timesUnit
\cong \timesUnit \times \timesUnit
\xrightarrow{u_{A} \times v_{B}} \eC(A, A) \times \eD(B,B)$
in terms of the units $u$ of $\eC$ and $v$ of $\eD$.
Similarly, one also defines the composition for $\eC \otimes \eD$ in terms of the composition
morphisms of $\eC$ and $\eD$, appealing to the symmetry in~$\VC$~\cite[Sec. 1.4]{Kelly82:ConceptsEnrichedCatTheory}.
The tensor product also extends to $\VC$-functors and $\VC$-natural transformations,
which makes it a 2-functor.
Finally, one defines $\otimesUnit$ to be the unit $\VC$-category with one object $0$ and
$I(0,0) = \timesUnit$ and we thus obtain, with suitable definitions of associators etc., a
symmetric monoidal 2-category $(\VCat, \otimes, \otimesUnit)$.

Most of the categories we are interested in are also \textit{dagger-compact}
categories ($\dagger\text{-}CC$).
These are SMCs $(\mathbf{C},\otimes,\mathit{I})$ with some additional structure.
First, they are equipped with an endofunctor
$\dagger: \mathbf{C}^{op} \to \mathbf{C}$, that satisfies $(\text{Id}_A)^\dagger = \text{Id}_A$,
$(g \circ f)^\dagger = f^\dagger \circ g^\dagger$, $(f^\dagger)^\dagger=f$, and
$(f \otimes g)^\dagger = f^\dagger \otimes g^\dagger$.
And secondly, for every object $A$ there exists a \textit{dual} $A^*$ such that there exists
\textit{unit} $\eta_A: I\to A \otimes A^*$ and \textit{counit} $\epsilon_A: A^*\otimes A \to I$
morphisms subject to some conditions~\cite{heunenvicary}.

We are interested in categories that let us reason about quantum mechanics.
One of them is \textbf{FdHilb}, the category of finite dimensional Hilbert spaces of the form
$\bC^n$ and linear maps as morphisms.
The category $\qbit$ is the (full) subcategory of \textbf{FdHilb} with objects Hilbert spaces of
the form $\bC^{2^n}$ and linear maps.
Similarly, the category $\cpm$~\cite{selinger-dagger-2007} has objects $\bC^{2^n}$ and
morphisms completely positive linear maps between them~\cite{caretteGround}.
We usually work in $\qbit$ when reasoning about pure quantum evolutions and in $\cpm$ when
impure quantum evolutions (such as noise) can take place.
All of these categories are $\dagger\text{-}CC$, with the monoidal structure $\otimes$ given by
the usual Kronecker product of vector spaces and the dagger $\dagger$ being the conjugate transpose.


\section{Algebraic Enrichment}
\label{sec:algebras}

In this section, we are going to recall the concept of monoidal and affine monads,
and discuss some properties of the \emph{Eilenberg-Moore} category of a monad.
We also start applying the \emph{Distribution monad} and the \emph{Multiset monad} to
running examples that will be of interest in later sections.

The Distribution monad $(\md,\mu,\eta)$%
\footnote{We will often refer to a monad $(\mt,\mu,\eta)$ as $\mt$.} contains the functor
$\md:\SetC\to\SetC$ that maps a set $A$ to the set $\md(A)$ of (finitely supported)
probability distributions over elements of $A$.
We write probability distributions as formal convex sums: $\sum_{a} p_a [a]\in \md(A)$ such that
$a\in A, p_a \in [0,1]$, and $\sum_a p_a = 1$.
$\md$ acts on a morphism $f$ by simply applying $f$ to the underlying set:
$(\md f)(\sum_{a} p_a [a]) = \sum_{a} p_a [f(a)]$.
The unit of the monad is the map $\eta: A\to\md(A): a \mapsto 1[a]$
(the Dirac distribution), and the multiplication
$\mu$ ``flattens'' a distribution of distributions by multiplying the probabilities together:
$\mu: \md(\md(A)) \to \md(A): \sum_q p_q [\sum_a q_a [a]]\mapsto \sum_{a}r_a[a]$ where
$r_a = \sum_q p_q q_a$~\cite{jacobsNew}.

The functor $\md$ is also a monoidal functor, which makes $(\md,\mu,\eta)$ a \emph{monoidal monad}.
In particular, this means that there exists a map:
$$\nabla: \md(A)\times\md(B)\to\md(A\times B): \left( \sum_{a} p_a [a], \sum_{b} p_b [b] \right) \mapsto \sum_{a,b} p_ap_b [(a,b)]$$
for every $A,B\in |\mathbf{Set}|$.

A monad $\mt:\CC\to\CC$ is \emph{affine} if there is an isomorphism
$\mt(\timesUnit) \cong \timesUnit$ for $\timesUnit$ the terminal
object of $\CC$~\cite{jacobs-affine}.
This is the case for $\md$.

If $\md$ is a monad for expressing convex combinations of elements of a set,
the Multiset monad $\mm$ is its analogue for linear combinations with coefficients
over some semiring.

We recall that given any monad $\mt$ in $\CC$ we can construct its Eilenberg-Moore category,
with objects $\mt$-\emph{algebras} of the form $(A,\alpha_A)$ for $A\in\obj{\CC}$ and
$\mt$-\emph{action} $\alpha_A:\mt(A)\to A$ such that
$\alpha_A \circ T(\alpha_{A}) = \alpha_A \circ \mu_A$ and $\alpha_A \circ \eta_A = \text{Id}_A$.
Algebra homomorphisms $f:(A,\alpha_A)\to (B,\alpha_B)$ are morphisms of the underlying objects
$f: A\to B$ that commute with the action: $f \circ \alpha_A = \alpha_B \circ T(f)$.
The identity and composition follow from the ones for the underlying objects~\cite{maclane}.

For a monad $\mt$ on $\CC$, we have that $\emt$ is complete whenever $\CC$ is complete.
Cocompleteness is not as immediate, but if $\CC = \SetC$ then $\emt$ is also cocomplete
\cite{Barr85:ToposesTriples,jacobsConvex}.
This makes $\emt$ over monads on $\SetC$ a complete and cocomplete category and
in particular, $\emt$ has reflexive coequalizers, which we use to define the tensor
product of algebras.

When $\mt$ is a monoidal monad on a monoidal category $(\CC,\otimes,I)$,
the tensor product of $\mt$ algebras $(A,a), (B,b)$, denoted $(A,a)\otimes^\mt(B,b)$,
is (if it exists) defined as the coequalizer diagram~\cite{seal,brandenburg}
\begin{equation}
  \label{eq:coeq}
  \begin{tikzcd}
    {F(T(A)\otimes T(B))} && {F(A\otimes B)} & {(A,a)\otimes^\mt (B,b)}
    \arrow["{\mu\cdot F(\nabla)}", shift left=2, from=1-1, to=1-3]
    \arrow["{F(a\otimes b)}"', shift right=2, from=1-1, to=1-3]
    \arrow["q", from=1-3, to=1-4],
  \end{tikzcd}
\end{equation}

where $F:\CC\to \emt: A\mapsto (\mt(A),\mu)$ is the left adjoint to the
\emph{forgetful} functor $U:\emt\to\CC: (A,\alpha_A)\mapsto A$ that maps objects to
their free algebras over $\mt$.
Given that we need $\emt$ to be monoidal in order to use it as a base of enrichment,
diagram \eqref{eq:coeq} above is a convenient representation of the tensor product of algebras.
The rest of the structure to make $\emt$ a (symmetric) monoidal category follows under certain
conditions, in particular when $(\CC,\otimes,I)$ is a closed (S)MC and the coequalizer
\eqref{eq:coeq} exists for all algebras $(A,a), (B,b)$~\cite{seal}.

We can define (symmetric) monoidal structure in the category of \emph{free algebras} as follows.
Using \eqref{eq:coeq} for the category of free algebras over a monoidal monad, we have that
the following diagram forms a coequalizer.
\begin{equation}
  \label{eq:freecoeq}
  \begin{tikzcd}[column sep=2.2em]
    {F(TT(A)\otimes TT(B))} && {F(T(A)\otimes T(B))} && {F(A\otimes B)}
    \arrow["{\mu\cdot F(\nabla)}", shift left=2, from=1-1, to=1-3]
    \arrow["{F(\mu\otimes \mu)}"', shift right=2, from=1-1, to=1-3]
    \arrow["{\mu\cdot F(\nabla)}", from=1-3, to=1-5]
  \end{tikzcd}
\end{equation}
Therefore, $(\mt(A),\mu)\otimes^\mt (\mt(B),\mu) \cong F(A \otimes B)$~\cite[Prop. 2.5.2]{seal}.
The monoidal unit is $I^\mt = (T(I),\mu)$, while the associator, unitor, and symmetry (if present) are
the images of the ones in $(\CC,\otimes,I)$ under $F$.
We then have that $(\emt, \otimes^\mt, I^\mt)$ is the (symmetric) monoidal category
of free $\mt$-algebras.

A functor $F: (\VC_1,\otimes_{\VC_1},I_{\VC_1}) \to (\VC_2,\otimes_{\VC_2},I_{\VC_2})$
between two monoidal categories can be lifted to a $2$-functor
$F_*: \ECatC{\VC_1}\to \ECatC{\VC_2}$~\cite{borceux1994}.
This is called a \emph{change of enriching}, where we turn a $\VC_1$-category into a
$\VC_2$-category.
Indeed, given a $\VC_1$-category $\eC$, we can construct the $\VC_2$-category
$F_*\eC$ by defining $\obj{F_*\eC}:= \obj{\eC}$ and, for every $A, B \in \obj{F_*\eC}$,
the hom-objects are $F_*\eC(A,B) := F(\eC(A,B))$ with composition and identity element
following from the ones in $\eC$ under $F$.
For a symmetric monoidal category $(\VC,\otimes,I)$, an important instance is the functor
$\VC(I, -)_{*} \from \VCat \to \CatC$ called the \emph{underlying category} functor and it
is denoted by $(-)_{0}$.

The following lemma states explicitly the case when one of the enriching categories is $\SetC$.
\begin{lemma}[{\cite[Prop. 6.4.7]{borceux1994}}]
  \label{lemma:underlyingAdjoint}
  Let $(\VC,\otimes,I)$ be a closed symmetric monoidal category with coproducts.
  Then the hom-functor $\VC (I,-) \from \VC \to \SetC$ has a left adjoint $F$ that
  sends a set $X$ to $F(X)=\coprod_X I$, the $X$-th fold copower of $I$.
  Moreover, $F$ is a strong morphism of symmetric monoidal categories and the induced $2$-functor
  $F_{*}$ is left-adjoint to the underlying category functor $(-)_{0}$.
\end{lemma}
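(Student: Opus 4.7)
The plan is to establish the three claims in order: first the ordinary adjunction $F \dashv \VC(I,-)$, then the strong symmetric monoidal structure on $F$, and finally the $2$-adjunction $F_{*} \dashv (-)_{0}$.

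For the ordinary adjunction, I would exhibit the natural bijection
\[
  \VC\!\left(\textstyle\coprod_{X} I,\, A\right) \;\cong\; \SetC(X, \VC(I,A))
\]
directly from the universal property of coproducts: a morphism out of $\coprod_X I$ is the same data as an $X$-indexed family of morphisms $I \to A$, which is precisely a function $X \to \VC(I,A)$. Naturality in $X$ and $A$ is then an immediate unwinding.

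For the strong monoidal structure on $F$, I would use that closedness of $\VC$ makes $-\otimes-$ a left adjoint in each variable and therefore cocontinuous in each variable. The unit comparison $I \to F(\timesUnit)$ is the coprojection of the unique copy $I \hookrightarrow \coprod_{\timesUnit} I$; it is an isomorphism because a copower over a singleton is the object itself. The binary comparison is obtained from the chain
\[
  F(X) \otimes F(Y)
  \;=\; \left(\textstyle\coprod_{X} I\right) \otimes \left(\textstyle\coprod_{Y} I\right)
  \;\cong\; \textstyle\coprod_{X}\coprod_{Y} (I \otimes I)
  \;\cong\; \textstyle\coprod_{X \times Y} I
  \;=\; F(X \times Y),
\]
where the first isomorphism uses cocontinuity of $\otimes$ in each variable and the second uses the unitor $I \otimes I \cong I$ together with the fact that a coproduct of coproducts is a coproduct over the product index set. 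I would then check the (symmetric) monoidal coherence diagrams by reducing them, via these isomorphisms, to coherence in $(\VC, \otimes, I)$ combined with the universal property of coproducts; this is a diagram chase that I expect to be routine once the comparison maps are fixed.

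For the $2$-adjunction $F_{*} \dashv (-)_{0}$, I would define on objects the two directions of the correspondence and then verify naturality in each variable as well as compatibility with $\VC$-natural transformations. Given an ordinary category $\mathbf{C}$ and a $\VC$-category $\eD$, a $\VC$-functor $G \from F_{*}\mathbf{C} \to \eD$ is specified by an assignment of objects together with morphisms $G_{A,B} \from F(\mathbf{C}(A,B)) \to \eD(A,B)$ in $\VC$, and by the adjunction of part one these correspond bijectively to functions $\mathbf{C}(A,B) \to \VC(I, \eD(A,B)) = \eD_{0}(A,B)$. I would then verify that the $\VC$-functoriality axioms for $G$ (compatibility with identity and composition, expressed via the morphisms $j_A \from I \to \eD(A,A)$ and $\enr{\circ}$) translate under the bijection exactly to the usual functoriality axioms for an ordinary functor $\mathbf{C} \to \eD_{0}$; this uses that $F$ is strong monoidal, so that the composition in $F_{*}\mathbf{C}$ is obtained from composition in $\mathbf{C}$ via the comparison maps described above. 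The analogous verification for $\VC$-natural transformations versus ordinary natural transformations then upgrades the $1$-adjunction to a $2$-adjunction.

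The main obstacle I expect is the last step: keeping track of how the monoidal comparison isomorphisms $F(X) \otimes F(Y) \cong F(X \times Y)$ intervene when translating the enriched composition axiom into the ordinary one, and checking that this translation is functorial in $\mathbf{C}$ and $\eD$. The arguments for parts one and two are essentially forced by universal properties, whereas part three requires a careful bookkeeping of coherence data which is the only non-mechanical ingredient.
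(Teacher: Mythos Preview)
The paper does not prove this lemma at all: it is stated with a citation to Borceux and used as a black box. Your proof plan is correct and follows the standard route (universal property of copowers for the adjunction, cocontinuity of $\otimes$ in a closed SMC for the strong monoidal structure, and transport along the hom-set bijection for the $2$-adjunction), which is essentially the argument one finds in the cited reference.
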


\begin{theorem}
  \label{thm:underlyingAdjoint}
  A monoidal monad $\mt$ on $(\SetC,\times,\timesUnit)$ endows the category
  of $\mt$-algebras $\emt$ with a bicomplete (complete and cocomplete) closed SMC structure.
  This allows to lift the free-forgetful adjunction of~\Cref{lemma:underlyingAdjoint} as a
  change of enriching between $\emt$-categories and $\SetC$-categories for a monoidal $\mt$.
\end{theorem}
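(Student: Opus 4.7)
The plan is to verify that $\emt$ satisfies the hypotheses of \Cref{lemma:underlyingAdjoint} with base of enrichment $\VC = \emt$, and then apply the lemma. First, I would establish bicompleteness of $\emt$: completeness follows from the standard fact that the forgetful functor $U \from \emt \to \SetC$ creates limits, so $\emt$ inherits all small limits from $\SetC$; cocompleteness of Eilenberg--Moore categories over $\SetC$ is the cited result of Barr--Wells and Jacobs, which in particular supplies the coproducts needed to invoke \Cref{lemma:underlyingAdjoint}.

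Second, I would establish the closed SMC structure on $\emt$. Because $\mt$ is a monoidal monad on the closed SMC $(\SetC,\times,\timesUnit)$ and $\SetC$ admits all reflexive coequalizers by cocompleteness, the tensor product $\otimes^{\mt}$ defined by coequalizer~\eqref{eq:coeq} exists on every pair of algebras, and Seal's construction then supplies the associator, unitor and braiding by lifting the corresponding data from $\SetC$. For closedness I would appeal to the Kock--Linton theorem: a commutative monad on a symmetric monoidal closed category whose algebraic tensor product exists induces a symmetric monoidal closed structure on its Eilenberg--Moore category. Since monoidal monads on an SMC are precisely commutative monads, this applies to $\mt$ on $\SetC$.

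Having shown $(\emt, \otimes^{\mt}, I^{\mt})$ is a bicomplete closed SMC, I apply \Cref{lemma:underlyingAdjoint} to obtain a strong monoidal left adjoint $F' \from \SetC \to \emt$ to the hom-functor $\emt(I^{\mt}, -)$, together with the fact that the induced $2$-functor $(F')_{*}$ is left adjoint to the underlying-category $2$-functor $(-)_{0}$. To conclude that this is exactly the free-forgetful adjunction lifted as a change of enriching, I identify $F'$ with the free-algebra functor $F$: since $I^{\mt} = F(\timesUnit)$ and the ordinary adjunction $F \dashv U$ yields $\emt(F(\timesUnit),(A,\alpha)) \cong \SetC(\timesUnit, A) \cong U(A,\alpha)$ naturally in $(A,\alpha)$, we have $\emt(I^{\mt}, -) \cong U$, and uniqueness of adjoints gives $F' \cong F$.

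The main obstacle I expect is verifying closedness of $\emt$: the Kock--Linton construction of the internal hom requires an equalizer in $\emt$ whose compatibility with $\otimes^{\mt}$ depends on the commutativity of $\mt$, and the coherence between this internal hom and the tensor product has to be checked carefully. Once that is in place, the remainder is essentially bookkeeping, combining the cited (co)completeness results for Eilenberg--Moore categories over $\SetC$ with a direct application of \Cref{lemma:underlyingAdjoint}.
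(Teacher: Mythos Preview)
Your proposal is correct and follows the same route as the paper: establish bicompleteness of $\emt$ (completeness via creation of limits, cocompleteness from Barr--Wells/Jacobs), build the closed SMC structure using Seal's coequalizer tensor and Kock's closedness result, then invoke \Cref{lemma:underlyingAdjoint}. The paper's proof is only a few lines and cites precisely these ingredients.

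Your final identification of the copower left adjoint $F'$ from \Cref{lemma:underlyingAdjoint} with the free-algebra functor, via $\emt(I^{\mt},-)\cong U$ and uniqueness of adjoints, goes beyond what the theorem itself asserts. The paper reads ``the free-forgetful adjunction of \Cref{lemma:underlyingAdjoint}'' as the adjunction that lemma supplies, and defers the identification with the monad's own free-forgetful adjunction to the separate \Cref{corollary:free-forget}. Interestingly, your argument for that identification is both shorter and more general than the paper's: you use only $I^{\mt}=F(\timesUnit)$ together with $F\dashv U$, whereas the paper establishes it under the extra hypothesis that $T$ is affine.
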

\begin{proof}
  The proof follows from \Cref{lemma:underlyingAdjoint} and previous arguments.
  Given that $\emt$ for $T$ a monad on $\SetC$ is bicomplete, then coequalizer~\eqref{eq:coeq}
  exists and we can define tensor products of algebras.
  We can then make $\emt$ a symmetric monoidal category given that $\SetC$ is closed symmetric
  monoidal.
  Finally, $\emt$ can be made into a closed category following~\cite{kock-1971} given
  that $\SetC$ has equalizers.
  We can then use~\Cref{lemma:underlyingAdjoint} to create a change of enriching between
  $\emt$-categories and $\SetC$-categories.
\end{proof}

\begin{corollary}
  \label{corollary:free-forget}
  Let $\mt$ be a monoidal monad on $\SetC$ defined by a free-forgetful adjunction
  $U: \emt \rightleftarrows \SetC: L$. If $L$ is naturally
  isomorphic to the functor $F$ from
  \Cref{lemma:underlyingAdjoint}, that is $(T(-),\mu)\cong\coprod_{(-)} (TI,\mu)$, then the induced
  $2$-functor $L_*$ is left adjoint to the underlying category functor $(-)_{0}$.
  This lets us
  use \Cref{thm:underlyingAdjoint} to enrich locally
  small categories with the free algebras over $T$.
  The condition $L \cong F$ holds, in particular, when $T$ is an affine monad.
\end{corollary}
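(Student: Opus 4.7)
The plan is to transport the adjunction $F_* \dashv (-)_0$ supplied by \Cref{lemma:underlyingAdjoint} along the natural isomorphism $L\cong F$. Before doing so, I would verify that $L$ is itself a strong symmetric monoidal functor, so that change of enriching along $L$ is well-defined: diagram~\eqref{eq:freecoeq} gives $L(A)\otimes^\mt L(B)\cong L(A\times B)$, and by definition $L(\timesUnit)=(T(\timesUnit),\mu)=I^\mt$ is the monoidal unit of $\emt$, supplying the multiplicator and unitor. Hence $L$ induces a 2-functor $L_*$. The hypothesis $L\cong F$, read as a monoidal natural isomorphism, lifts to a 2-natural isomorphism $L_*\cong F_*$, and since having a specified right adjoint is stable under natural isomorphism, the adjunction $F_*\dashv(-)_0$ of \Cref{lemma:underlyingAdjoint} transfers to $L_*\dashv(-)_0$. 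Combined with \Cref{thm:underlyingAdjoint}, this allows any locally small category $\CC$, viewed as a $\SetC$-enriched category, to be sent through $L_*$ to a free $\emt$-enriched category, which is the enrichment statement advertised.

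For the affine case, I would show that $L\cong F$ follows essentially for free. Since $L\dashv U$, $L$ preserves all colimits, so
\[
L(X) \;=\; L\Bigl(\coprod_{x\in X}\timesUnit\Bigr) \;\cong\; \coprod_X L(\timesUnit) \;=\; \coprod_X (T(\timesUnit),\mu)
\]
in $\emt$. Using $I^\mt=(T(\timesUnit),\mu)$, the right-hand side is exactly $F(X)=\coprod_X I^\mt$, giving the desired natural isomorphism of underlying functors. Affineness then sharpens this: the iso $T(\timesUnit)\cong\timesUnit$ presents $I^\mt$ as the terminal algebra on a singleton, which is the canonical choice that lets the comparison respect the monoidal units and interact coherently with the strong monoidal structures of both $L$ and $F$, via uniqueness of adjoints between symmetric monoidal categories.

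The main obstacle is not the adjunction transfer itself but verifying that the natural isomorphism $L\cong F$ is \emph{monoidal}, since only monoidal natural isomorphisms induce well-defined 2-natural isomorphisms between change-of-enriching 2-functors. Affineness gives the cleanest route to this coherence: it forces $L(\timesUnit)$ and the copower $F(\timesUnit)=I^\mt$ to coincide with the trivial algebra on a singleton, so the structure maps align on the nose. Without affineness, one would have to check by hand that the multiplication $\mu$ interacts with the coproduct injections in a way compatible with the monoidal unit of $\emt$.
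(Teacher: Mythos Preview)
Your proof is correct. For the first assertion you and the paper argue the same way---transfer the adjunction $F_*\dashv(-)_0$ along $L\cong F$---though you are more careful in checking that $L$ is strong symmetric monoidal (via~\eqref{eq:freecoeq}) and that the transfer requires a \emph{monoidal} isomorphism, where the paper just says to ``substitute $F$ with $L$.''

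For the affine clause the routes diverge. The paper builds a Yoneda argument from the chain
\[
\emt(L(X),Y)\;\cong\;\SetC(X,U(Y))\;\cong\;\SetC(X,\emt(I^\mt,Y))\;\cong\;\emt(F(X),Y),
\]
invoking $T(\timesUnit)\cong\timesUnit$ for the middle step. You instead use that the left adjoint $L$ preserves coproducts, giving $L(X)\cong\coprod_X L(\timesUnit)=\coprod_X I^\mt=F(X)$ directly. Your argument is more elementary and, as you effectively notice, does not actually use affineness to produce the underlying isomorphism; in fact the paper's middle isomorphism $U(Y)\cong\emt(I^\mt,Y)$ is just the free--forgetful adjunction evaluated at the singleton and does not need affineness either. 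Your suggestion that affineness is really about making the isomorphism monoidal is a reasonable reading, but neither proof pins this down precisely---the honest takeaway is that the hypothesis $L\cong F$ is always satisfied for monoidal monads on $\SetC$, with affineness merely one transparent witness.
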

\begin{proof}
  Whenever we have that $L \cong F$, the enrichment over free $T$-algebras comes simply from
  substituting $F$ with $L$ in \Cref{lemma:underlyingAdjoint} and \Cref{thm:underlyingAdjoint}.
  To see that this condition holds when $T$ is affine, we construct hom isomorphisms
  $\emt(L(X),Y)\cong\SetC(X,U(Y))\cong\SetC(X,\emt(I^\mt,Y))\cong\emt(F(X),Y)$
  for some $X\in \obj{\SetC}, Y\in\obj{\emt}$, with the second and last isomorphism coming
  from their respective adjunctions.
  The remaining one is due to $T(*)\cong *$, which allows us to get algebra homomorphisms
  $h: (T(*),\mu) \mapsto (T(Y),\mu)$ from maps $h': * \mapsto Y$, while the other direction just
  requires to forget the homomorphism structure.
\end{proof}

Let us construct an example for \Cref{thm:underlyingAdjoint} and relate it to graphical languages.
If we have a locally small monoidal category $\CC$ with morphisms $f,g:A\to B, h:B\to C$,
represented graphically as
$\scalebox{0.8}{\tikzfig{figures/f}},
\scalebox{0.8}{\tikzfig{figures/g}}, \scalebox{0.8}{\tikzfig{figures/h}}$,
we can freely enrich $\CC$ over $\emd$ following the change of enriching category method above.
Then, we can realize graphically a probabilistic process involving $f$ and $g$ with probability
$0.9$ and $0.1$ respectively, followed by applying $h$ deterministically
(that is, it occurs with probability $1$) afterwards as follows.

\begin{equation}\label{eq:example1}
  \scalebox{1}{\tikzfig{figures/example1}}
\end{equation}

Intuitively, we distinguish between probabilistic and deterministic processes by having the former
enclosed within \emph{distribution brackets}
(in the same way as we would represent them as a formal sum $0.9[f] + 0.1[g]$),
that we choose to depict as trapezoids in this paper.
Deterministic processes are depicted without the bracket enclosing mostly as syntactic sugar,
otherwise they would simply have a single choice with probability $1$.
We can see how wires can have weights inside this environment, and how each wire
represents a \emph{probabilistic choice}.
Intuition also tells us that we could for example rewrite the diagram above by distributing
$h$ over the two probabilistic branches.
We will discuss in later sections which graphical rules capture the interactions present
in these enriched categories.

It is natural to then ask if our enriched category $\CC$ maintained its monoidal structure,
and if other desired properties (such as braiding and symmetry, if present) would still hold too.
We will address this in the next section.


\section{Enriched Monoidal Categories}
\label{sec:v-monoidal}

Recall from \Cref{sec:background} that a symmetric monoidal category $\VC$ gives rise to
a symmetric monoidal 2-category $(\VCat, \otimes, \otimesUnit)$ of $\VC$-categories.
This structure allows us to define an enriched (symmetric) monoidal category to be a (symmetric)
pseudo-monoid in
$\VCat$~\cite{DS97:MonoidalBicategoriesHopf}, which amounts to the following explicit
definition~\cite{KYZ+21:EnrichedMonoidalCategories,morrison}.
Let us denote by $S$ the symmetry isomorphism of $\VCat$.
A symmetric monoidal $\VC$-category is a tuple
$(\eC, \odot, \odotUnit, \alpha, \lambda, \rho, \sigma)$
consisting of:
\begin{itemize}
\item a $\VC$-enriched category $\eC$
\item a $\VC$-functor $\odotUnit \from \otimesUnit \to \eC$
\item a $\VC$-functor $\odot \from \eC \otimes \eC \to \eC$
\item $\VC$-natural isomorphisms
  $\alpha \from \odot \comp (\odot \otimes \Id_{\eC}) \to \odot \comp (\Id_{\eC} \otimes \odot)$
  (associator),
  $\lambda \from \odot \comp (U \otimes \Id_{\eC}) \to \Id_{\eC}$ (left unitor),
  $\rho \from \odot \comp (\Id_{\eC} \otimes U) \to \Id_{\eC}$ (right unitor), and
  $\sigma \from \odot \to \odot \comp S$ (symmetry)
\end{itemize}
subject to the expected coherence axioms~\cite{DS97:MonoidalBicategoriesHopf}.
A (symmetric) monoidal $\VC$-functor $(\eC, \odot_{1}, U_{1}) \to (\eD, \odot_{2}, U_{2})$
is a lax (symmetric) pseudo-monoid homomorphism, which means that it
consists of a $\VC$-functor $h \from \eC \to \eD$ and two $\VC$-natural transformations
$h^{0} \from U_{2} \to h \circ U_{1}$ and
$h^{2} \from \odot_{2} \comp (h \otimes h) \to h \comp \odot_{1}$
that are coherent with the associators, unitors and
symmetries~\cite{KYZ+21:EnrichedMonoidalCategories}.
Together, symmetric monoidal $\VC$-categories and functors form a category $\SMVCat$.

Our goal is now to lift the adjunction between enriched categories from~\Cref{thm:underlyingAdjoint}
to also include enriched monoidal structure.
To this end, we introduce \emph{lax monoidal strict $2$-functors}, which are
tuples $(G, G^{0}, G^{2})$ where $G \from \VC \to \WC$ is a strict functor of $2$-categories
and  $(G, G^{0}, G^{2}) \from (\VC, \otimes, \otimesUnit) \to (\WC, \times, \timesUnit)$
is a lax monoidal functor on the underlying $1$-categories.

\begin{theorem}
  \label{thm:pseudo-monoid-adjunction}
  Lax monoidal strict $2$-functors
  $(G, G^{0}, G^{2}) \from (\VC, \otimes, \otimesUnit) \to (\WC, \times, \timesUnit)$
  induce $2$-functors
  $\PMon(G) \from \PMon(\VC, \otimes, \otimesUnit) \to \PMon(\WC, \times, \timesUnit)$
  between $2$-categories of pseudo-monoids, lax homomorphisms and $2$-cells that are compatible
  with the homomorphism structures.
  If a $G$ has a monoidal left adjoint $F$, then $\PMon(F)$ is left adjoint to $\PMon(G)$.
  Finally, if the monoidal categories and functors are symmetric, then the adjunction
  can be improved to one between symmetric pseudo-monoids.
\end{theorem}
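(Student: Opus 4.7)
The plan is to proceed in three stages, following the three clauses of the theorem.

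First I would construct $\PMon(G)$ explicitly on objects, 1-cells, and 2-cells. Given a pseudo-monoid $(\eC, \odot, \odotUnit, \alpha, \lambda, \rho)$ in $\VC$, define its image to be the pseudo-monoid on $G(\eC)$ with multiplication $G(\odot) \comp G^{2}_{\eC, \eC} \from G(\eC) \times G(\eC) \to G(\eC \otimes \eC) \to G(\eC)$ and unit $G(\odotUnit) \comp G^{0} \from \timesUnit \to G(\otimesUnit) \to G(\eC)$. The associator, left unitor, and right unitor are assembled by pasting $G(\alpha), G(\lambda), G(\rho)$ against the coherence 2-cells of the lax structure $(G^{0}, G^{2})$. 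Verifying the pentagon and triangle axioms for the image then reduces to applying $G$ (strictly, as a 2-functor) to the original axioms and invoking the lax monoidal coherence of $G$. On a lax pseudo-monoid homomorphism $(h, h^{0}, h^{2})$, set the image 1-cell to $G(h)$ with laxators obtained by pasting $G(h^{0}), G(h^{2})$ against $G^{0}, G^{2}$; on 2-cells between lax homomorphisms, transport them by strict 2-functoriality of $G$.

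Second, for the adjunction $\PMon(F) \dashv \PMon(G)$, I would use the mate correspondence. The hypothesis that $F$ is a monoidal left adjoint means that the unit $\eta$ and counit $\varepsilon$ of $F \dashv G$ are monoidal natural transformations; equivalently, the lax structure of $G$ and the (op)lax structure of $F$ form mates. Consequently, for each pseudo-monoid $\eC$ in $\VC$, the component $\eta_{\eC}$ underlies a lax pseudo-monoid homomorphism $\eC \to \PMon(G)\PMon(F)(\eC)$, and similarly for $\varepsilon$ in the ambient 2-category over $\WC$. The triangle identities in the ambient 2-categories are preserved by $\PMon$ since it acts strictly on 2-cells. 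The required hom-category bijection is then given by the classical transpose $f \mapsto G(f) \comp \eta_{A}$, equipped at each end with laxators derived by pasting the lax structure on $f$ together with $G^{0}, G^{2}$ and naturality of $\eta$; the inverse sends $g$ to $\varepsilon_{B} \comp F(g)$ with the analogous derivation.

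Third, the symmetric case requires only the additional compatibility of $G^{2}$ with the braidings of $(\VC, \otimes)$ and $(\WC, \times)$. This makes the transported symmetry $\sigma$ satisfy the hexagon in $\PMon(\WC)$, and ensures that symmetric lax homomorphisms map to symmetric lax homomorphisms. Symmetry of $\eta$ and $\varepsilon$ as monoidal natural transformations then promotes the adjunction to the symmetric setting.

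The main obstacle is the pasting-diagram bookkeeping for the pentagon, triangle, and hexagon coherence axioms: each requires inserting $G^{0}$ and $G^{2}$ in the correct positions and combining the corresponding axiom in $\VC$ with the lax coherence of $G$. Conceptually these are standard mate and 2-categorical calculations, none of them deep, but there are enough of them that the proof is best organised as a succession of short diagram lemmas rather than one large computation. I would invoke the classical fact that a monoidal adjunction makes $\eta$ and $\varepsilon$ into monoidal natural transformations rather than re-prove it in detail.
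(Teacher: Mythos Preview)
Your proposal is correct and matches the paper's argument in its first and third parts: both define $\PMon(G)$ by taking the multiplication and unit of the image pseudo-monoid to be $G(\odot)\circ G^{2}$ and $G(\odotUnit)\circ G^{0}$, transport the coherence isomorphisms by applying $G$ and pasting with the lax structure, and handle the symmetric refinement via compatibility of $G^{2}$ with the braidings.

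The adjunction step differs in presentation. You lift the unit and counit of $F\dashv G$ to lax pseudo-monoid homomorphisms and then invoke the triangle identities; the paper instead builds the hom-bijection directly, setting $\tilde{\alpha}(h^{0}) = \alpha(h^{0}\bullet (F^{0})^{-1})$ and $\tilde{\alpha}(h^{2}) = \alpha(h^{2}\bullet (F^{2})^{-1})$ by whiskering with the inverses of the strong monoidal structure maps of $F$, and then checks the types using the explicit mate equations $(F^{0})^{-1} = \beta(G^{0})$ and $(F^{2})^{-1} = \beta(G^{2})\circ F(\eta\times\eta)$. Both routes rest on the same mate correspondence you invoke, so this is a packaging difference rather than a distinct argument: your unit--counit formulation is the cleaner 2-categorical statement, while the paper's version makes the transported laxators fully explicit. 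One small slip to fix: your unit component $\eta_{\eC}$ should be indexed by pseudo-monoids $\eC$ in $\WC$, not $\VC$, since $F\dashv G$ has $\eta \from \mathrm{Id}_{\WC} \Rightarrow GF$.
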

\begin{proof}
  The details and appropriate diagram chases are written in~\Cref{appProofThm2}, which go through
  the following steps.
  We begin by showing that $\PMon(G)$ maps  a pseudo-monoid
  $(\eC, \odot, \odotUnit, \alpha, \lambda, \rho, \sigma)$
  in $(\VCat, \otimes, I)$ to a pseudo-monoid
  $(G\eC, G(\odot) \circ G^{2}, G(\odotUnit) \circ G^{0}, G\alpha, G\lambda, G\rho, G\sigma)$
  in $(\WCat,\times,\timesUnit)$ by checking that it fulfills the pseudo-monoid
  axioms~\cite{DS97:MonoidalBicategoriesHopf}.

  Similarly, we check that $(G,G^{0},G^{2})$ maps
  a pseudo-monoid homomorphism $(h,h^{0},h^{2})$ to a pseudo-monoid homomorphism
  $(G(h),G(h^{0}),G(h^{2}))$.
  
  If $G$ has a strict left adjoint~$F$, which is also strong monoidal,
  we show that $(F, F^{0}, F^{2}) \dashv (G, G^{0}, G^{2})$ is a \emph{monoidal $2$-adjunction} if
  the mates~\cite{KS74:ReviewElements2categories} of $G^{0}$ and $G^{2}$ are the inverses of
  $F^{0}$ and $F^{2}$, respectively, as in the following equations, where $\beta_{A,B}$ is the
  natural isomorphism $\WC(A, GB) \xrightarrow{\cong} \VC(FA, B)$ and $\eta$ the unit of the
  adjunction:
  \begin{equation*}
  (F^{0})^{-1} = \beta(G^{0})
  \quad \text{and} \quad
  (F^{2})^{-1} = \beta(G^{2}) \comp F(\eta \times \eta).
\end{equation*}
\end{proof}

The following theorem, which shows that the change of enrichment extends to monoidal enriched
categories, follows from \Cref{thm:pseudo-monoid-adjunction} using that
$\SMVCat = \PMon(\ECatC{\VC}, \otimes, \otimesUnit)$
and that the change of enrichment gives a lax monoidal $2$-adjunction~\cite{borceux1994,Crutt}.

\begin{theorem}
  \label{thm:monoidal-enrichment}
  If $(G, G^{0}, G^{2}) \from (\VC, \otimes, \otimesUnit) \to (\WC, \times, \timesUnit)$
  is a symmetric monoidal functor between symmetric monoidal categories
  with a monoidal left adjoint $(F, F^{0}, F^{2})$, then there are adjunctions that commute
  with the forgetful functors as in the following diagram.
  \begin{equation*}
\begin{tikzcd}[column sep=large]
	\SMVCat & \SMECatC{\WC} \\
  \VCat & \ECatC{\WC}
	\arrow[""{name=0, anchor=center, inner sep=0}, "{\chEnr{G}}"', curve={height=6pt}, from=2-1, to=2-2]
	\arrow[""{name=1, anchor=center, inner sep=0}, "{\chEnr{G}}"', curve={height=6pt}, from=1-1, to=1-2]
	\arrow[""{name=2, anchor=center, inner sep=0}, "\chEnr{F}"', curve={height=6pt}, from=1-2, to=1-1]
	\arrow[""{name=3, anchor=center, inner sep=0}, "\chEnr{F}"', curve={height=6pt}, from=2-2, to=2-1]
	\arrow[from=1-1, to=2-1]
	\arrow[from=1-2, to=2-2]
	\arrow["\dashv"{anchor=center, rotate=270}, draw=none, from=0, to=3]
	\arrow["\dashv"{anchor=center, rotate=270}, draw=none, from=1, to=2]
\end{tikzcd}
  \end{equation*}
\end{theorem}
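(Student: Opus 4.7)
The plan is to reduce this to a direct application of \Cref{thm:pseudo-monoid-adjunction}, exploiting the identifications $\SMVCat = \PMon(\ECatC{\VC}, \otimes, \otimesUnit)$ and $\SMECatC{\WC} = \PMon(\ECatC{\WC}, \otimes, \otimesUnit)$. Under these identifications, the top adjunction $\SMVCat \rightleftarrows \SMECatC{\WC}$ in the diagram is obtained by applying $\PMon$ to a monoidal $2$-adjunction $\chEnr{F} \dashv \chEnr{G}$ between the symmetric monoidal $2$-categories $\ECatC{\VC}$ and $\ECatC{\WC}$, so the task is to produce exactly such an adjunction at the level of change of enrichment.

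First I would lift the base-level monoidal adjunction $F \dashv G$ to the enriched level. Change of enrichment acts pointwise on hom-objects, so applying $G^{2}$ and $G^{0}$ hom-wise yields comparison $\WC$-functors $(\chEnr{G})^{2} \from \chEnr{G}(\eC) \otimes \chEnr{G}(\eD) \to \chEnr{G}(\eC \otimes \eD)$ and $(\chEnr{G})^{0} \from \otimesUnit \to \chEnr{G}(\otimesUnit)$, whose coherences with the enriched composition, identity, and associator follow from the lax monoidality of $G$ together with the fact that the hom-objects of $\eC \otimes \eD$ are themselves $\otimes$-products of the hom-objects of $\eC$ and $\eD$. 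The same pointwise construction, carried out with the invertible comparisons $F^{2}, F^{0}$, equips $\chEnr{F}$ with a strong monoidal structure.

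Next I would verify the mate condition of \Cref{thm:pseudo-monoid-adjunction}: that the mates of $(\chEnr{G})^{2}, (\chEnr{G})^{0}$ under $\chEnr{F} \dashv \chEnr{G}$ are the inverses of $(\chEnr{F})^{2}, (\chEnr{F})^{0}$. Because both the hom-isomorphism of $\chEnr{F} \dashv \chEnr{G}$ and its monoidal comparisons are induced hom-object-wise from the base-level data, these enriched mate equations reduce to the corresponding equations for $F \dashv G$, which hold by the hypothesis that this adjunction is monoidal. Invoking \Cref{thm:pseudo-monoid-adjunction} then delivers the top adjunction in the diagram. Commutativity with the vertical forgetful functors follows on the nose: $\PMon(\chEnr{G})$ and $\PMon(\chEnr{F})$ act on underlying $\VC$-categories exactly as $\chEnr{G}$ and $\chEnr{F}$ do, while the forgetful $\SMVCat \to \VCat$ merely projects away the pseudo-monoid data.

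The principal obstacle is the mate-condition bookkeeping, which is conceptually routine once every piece of structure is recognised as being pointwise-induced from the base level, but which still requires tracing several coherence diagrams involving enriched composition, the enriched tensor, and the adjunction unit and counit. Since all of these diagrams descend pointwise to corresponding diagrams for $F \dashv G$, no ingredient beyond the monoidality of the original adjunction is needed to close them, and the theorem follows.
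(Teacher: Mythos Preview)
Your proposal is correct and follows essentially the same route as the paper: both identify $\SMVCat = \PMon(\ECatC{\VC}, \otimes, \otimesUnit)$ and then apply \Cref{thm:pseudo-monoid-adjunction} to the change-of-enrichment $2$-adjunction $\chEnr{F} \dashv \chEnr{G}$. The only difference is that the paper cites the literature (Borceux, Cruttwell) for the fact that change of enrichment yields a lax monoidal $2$-adjunction, whereas you sketch the pointwise verification of this directly.
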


From this theorem and combining the results from~\Cref{sec:algebras}
we can derive the following corollary, which is our main tool for building monoidal
diagrams that are enriched with algebraic operations.
\begin{corollary}
  \label{co:free}
  A monoidal monad $\mt$ on $\SetC$ with an adjunction between the free $T$-algebra functor
  and the underlying category functor (see \Cref{corollary:free-forget}) gives a free-underlying
  adjunction
  $(-)_0:\EmtCat \rightleftarrows \CatC:F_*$.
  This adjunction lifts to an adjunction between the $2$-categories
  of symmetric monoidal $\emt$-enriched and $\SetC$-enriched categories
  $(-)_0: \SMemtCat \rightleftarrows \SMCatC: F_*$.
  More explicitly, given such a monad $T$ on $\SetC$ and a SMC
  $(\CC, \otimes, I, \alpha, \lambda, \rho, \sigma)$ we can construct the freely $\emt$-enriched
  SMC
  $(\eC, \etens, \enr{I}, \enr{\alpha}, \enr{\lambda}, \enr{\rho}, \enr{\sigma})$.
\end{corollary}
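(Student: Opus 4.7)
The plan is to obtain the result by cascading the two preceding theorems. The base ingredient is already in place: \Cref{corollary:free-forget} yields, under the stated hypothesis, an adjunction $L : \SetC \rightleftarrows \emt : U$ between the underlying-category functor $U = \emt(I^{T},-)$ and the free-$T$-algebra functor $L$, and by \Cref{thm:underlyingAdjoint} the change-of-enriching $2$-functors fit together into an adjunction $F_{*} : \CatC \rightleftarrows \EmtCat : (-)_{0}$. Everything at the level of ordinary enrichment is therefore settled, so the work is to promote this to monoidal enrichment using \Cref{thm:monoidal-enrichment}.

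The first step is to verify that we are in the hypotheses of \Cref{thm:monoidal-enrichment} with $(\VC,\otimes,\otimesUnit)=(\emt,\otimes^{T},I^{T})$ and $(\WC,\times,\timesUnit)=(\SetC,\times,\timesUnit)$. The underlying functor $U=\emt(I^{T},-)$ inherits a lax symmetric monoidal structure from the closed SMC structure of $\emt$ established in \Cref{thm:underlyingAdjoint}: its laxator $U^{2}_{A,B} \from U(A)\times U(B)\to U(A\otimes^{T} B)$ sends a pair of global elements to their image under the tensor in $\emt$, and $U^{0}$ is the map $\timesUnit\to\emt(I^{T},I^{T})$ picking out the identity. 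The left adjoint $L$ is strong symmetric monoidal: by \Cref{corollary:free-forget} it agrees up to isomorphism with the coproduct functor $F$ of \Cref{lemma:underlyingAdjoint}, which is shown there to be a strong morphism of SMCs. Hence $L \dashv U$ is a symmetric monoidal adjunction in the sense of \Cref{thm:pseudo-monoid-adjunction}.

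The second step is the mates condition of \Cref{thm:pseudo-monoid-adjunction}, namely that the mates of $U^{0}$ and $U^{2}$ under the adjunction are the inverses of $L^{0}$ and $L^{2}$. Both equations reduce to routine identifications: $\beta(U^{0})$ picks out the free algebra on $\timesUnit$, which is $I^{T}=L(\timesUnit)$, matching $(L^{0})^{-1}$; and $\beta(U^{2})\circ L(\eta\times\eta)$ exhibits, by naturality of $\beta$ and the description of free tensors from diagram~\eqref{eq:freecoeq}, the canonical isomorphism $L(X\times Y)\cong LX\otimes^{T} LY$, which is precisely $(L^{2})^{-1}$. Once these diagrams commute, \Cref{thm:pseudo-monoid-adjunction} applies and lifts the monoidal $2$-adjunction between $\emt$ and $\SetC$ to a $2$-adjunction $\PMon(L_{*})\dashv\PMon(U_{*})$ between their $2$-categories of symmetric pseudo-monoids.

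Finally, we invoke the identification $\SMVCat = \PMon(\ECatC{\VC},\otimes,\otimesUnit)$ used just before the statement of \Cref{thm:monoidal-enrichment}, which rewrites the obtained adjunction as $(-)_{0} : \SMemtCat \rightleftarrows \SMCatC : F_{*}$ and commutes with the forgetful functors to $\EmtCat$ and $\CatC$, yielding the stated square. The explicit description of the free $\emt$-enriched SMC $(\eC,\etens,\enr{I},\enr{\alpha},\enr{\lambda},\enr{\rho},\enr{\sigma})$ is then read off by unfolding $\PMon(L_{*})$ on the pseudo-monoid $(\CC,\otimes,I,\alpha,\lambda,\rho,\sigma)$: objects stay the same, hom-objects become free $T$-algebras on the original hom-sets, and the coherence $2$-cells are the images of $\alpha,\lambda,\rho,\sigma$ under $L_{*}$. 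The main obstacle in this whole plan is the verification of the mates condition; once that unwinds cleanly, the rest is bookkeeping delegated to \Cref{thm:pseudo-monoid-adjunction,thm:monoidal-enrichment}.
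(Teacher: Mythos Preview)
Your proposal is correct and follows the same route the paper takes: the paper does not give a separate proof of this corollary but simply states that it follows ``from this theorem [\Cref{thm:monoidal-enrichment}] and combining the results from \Cref{sec:algebras}'', which is exactly the cascade you carry out. You supply more detail than the paper does---in particular the explicit verification that $L\dashv U$ is a symmetric monoidal adjunction and the mates condition---whereas the paper absorbs these checks into the statement of \Cref{thm:monoidal-enrichment} and its citations to Borceux and Cruttwell; but the underlying argument is identical.
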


Knowing that we keep the symmetric monoidal structure after doing the free enrichment,
we can justify drawing parallel composition of probabilistic operations in diagram form.
Continuing example~\eqref{eq:example1}, let us have another probabilistic process
in which $f'$ and $g'$ occur with probabilities $0.7$ and $0.3$ (respectively) parallelly
composed. Then we can draw the following picture.

\begin{equation}\label{eq:example2}
  \scalebox{1}{\tikzfig{figures/example2}}
\end{equation}


\section{Applications: ZX-calculus}
\label{sec:zx}
In this section, we show an example application of the categorical constructions
of the previous sections.
In particular, we are interested in demonstrating how we can take the Distribution monad and
enrich the quantum categories of interest for reasoning about probabilistic processes
in quantum systems.
Most importantly, we show how the \emph{ZX-calculus}, a graphical calculus for reasoning
about quantum processes, can be appropriately extended to accommodate the extra
structure on said categories and how additional graphical rewrite rules capture the
interaction of probabilistic and deterministic quantum operations.
We begin with a general introduction to quantum computing and ZX-calculus, and then follow
with the enrichment of our categories of interest, together with the introduction of the
extended notation, and we finish by giving an example of how we can use this for diagrammatic
reasoning of noise in quantum systems.

\subsection{Quantum Computing}

When referring to quantum systems and operations, we have to make a distinction whenever we take
\emph{impure} operations into account.
In the pure states formalisms, quantum states are normalized vectors in a Hilbert space of
dimension $\bC^{2^n}$, with $n$ the number of \emph{qubits} (quantum bits) of the system.
It is common to use Dirac bra-ket notation to represent states, for example, some important
single-qubit states are $\ket 0 = \begin{bsmallmatrix} 1\\0 \end{bsmallmatrix},
\ket1 = \begin{bsmallmatrix} 0\\1 \end{bsmallmatrix},
\ket+ = \frac{1}{\sqrt{2}} \cdot (\ket0 + \ket1),
\ket- = \frac{1}{\sqrt{2}}\cdot (\ket0 - \ket1)$.
We operate on qubits by performing unitary transformations $U$ on the quantum states.
A multi-qubit quantum system with states $\ket\psi$ and $\ket\phi$ corresponds
to the tensor (Kronecker) product of the quantum states: $\ket\psi \otimes \ket\phi$.
We will represent the $n$-fold tensor product of a state $\ket\psi$ by $\ket{\psi^{n}}$.
Simultaneous (but independent) operations also follow from tensoring unitaries.

When we take into consideration the possibility of applying non-unitary operations we require
a more general framework, which is the \emph{density matrix} and \textit{completely positive maps}
formalism.
In this case, quantum states are positive semi-definite Hermitian matrices $\rho$ of trace one.
We write them as $\rho = \sum_i p_i \ketbra{\psi_i}{\psi_i}$
(where $\bra{\psi_i} = \ket{\psi_i}^\dagger$, for $\dagger$ the conjugate transpose), that is,
a \emph{statistical ensemble} of quantum states $\ket{\psi_i}$ (as density matrices) with probability $p_i$.
Operations on density matrices are completely positive (CP) maps of the form
$\Phi: \rho \to \sum_i K_i\rho K_i^\dagger$ with the condition $\sum_i K_iK_i^\dagger \le 1$
(notice how unitary maps fall inside this description too).
When we want to reason about quantum systems in the presence of noise, we then have to use the density matrix
and CP map formalism. For more information on quantum computing, we refer the reader to~\cite{nc00},
and for a more categorical introduction to~\cite{heunenvicary}.

\subsection{The ZX-calculus}\label{subsec:zx}

The ZX-calculus~\cite{cd11} is a graphical language for reasoning about quantum
states and processes as diagrams.
The language consists of a set of \textit{generators}, which are the green and
red\footnote{Light and dark in grayscale, respectively.} \textit{spiders}
(also called $Z$ and $X$ spiders), the
\textit{Hadamard box}, the \textit{identity wire}, the \textit{swap}, the \textit{cup},
the \textit{cap}, and the \textit{empty diagram}.
In \autoref{fig:generators} we can see the generators of the ZX-calculus and their signature,
with input wire(s) coming from the top and outputs going to the bottom.
Spiders have a \textit{phase} $\alpha\in [0,2\pi)$,
which as we will see later is omitted when $\alpha=0$.
We can also see how to sequentially compose ($\circ$) arbitrary diagrams by connecting inputs
with outputs, and how to parallely compose diagrams
(as a tensor product $\otimes$) by placing them side by side. 



\begin{figure}
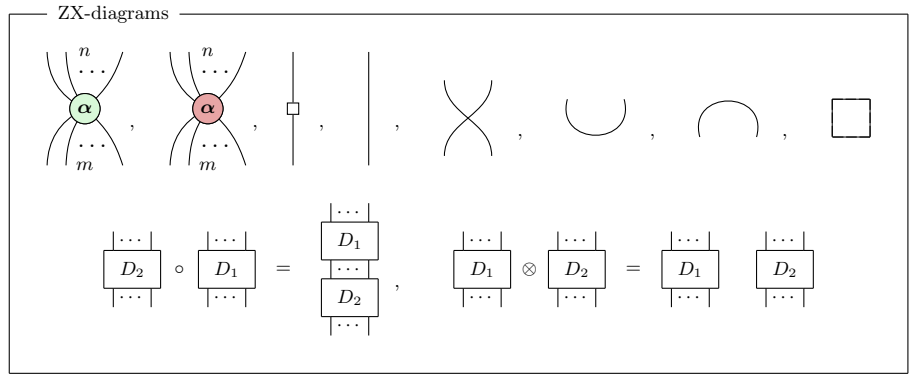

  \centering
  \resizebox{\columnwidth}{!}{%
    \tikzfig{figures/fig1}%
    }
  \caption{ZX-diagrams generators and how to compose them.}
  \label{fig:generators}
\end{figure}

Each of the generators has a \textit{standard interpretation} $\llbracket\cdot \rrbracket$
as a linear map in $\bC^{2^n}$ that we can find in \autoref{fig:zxInterpretation}.

\begin{figure}
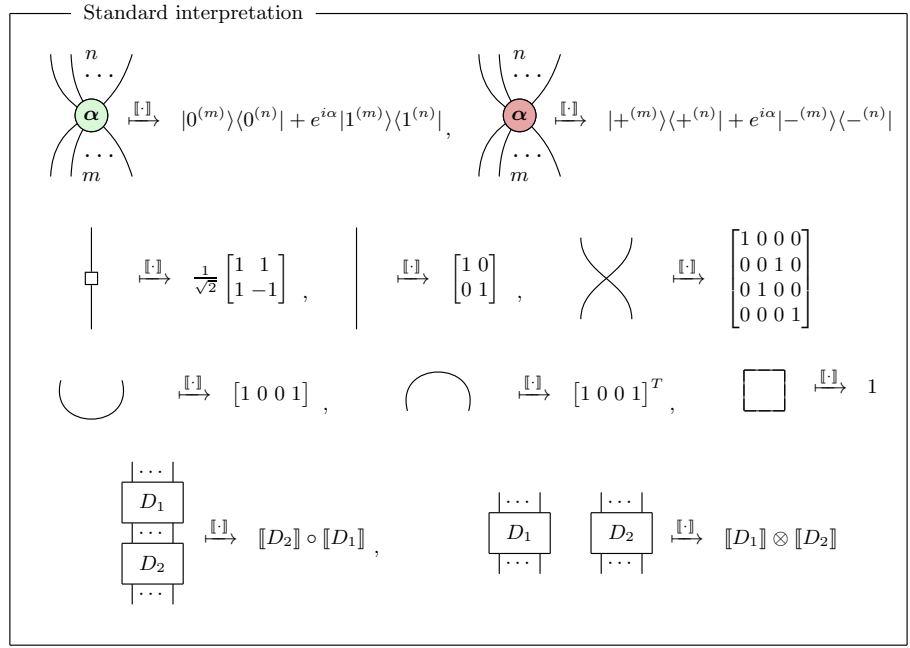

  \centering
  \resizebox{\columnwidth}{!}{%
    \tikzfig{figures/fig2}%
    }
  \caption{Standard interpretation of ZX-diagrams.}
  \label{fig:zxInterpretation}
\end{figure}

Categorically, ZX-diagrams form the category \textbf{ZX} with $|\mathbf{ZX}|=\mathbb{N}$
(where some $n\in\mathbb{N}$ is the number of wires,
which we can think of as an $n$-qubit quantum system) and morphisms being the generators.
The standard interpretation is a (monoidal) functor
$\llbracket \cdot \rrbracket: \mathbf{ZX}\to \qbit$ that acts on objects as
$\llbracket n \rrbracket=n$ and on morphisms as defined in
\autoref{fig:zxInterpretation}~\cite{johnSurvey}.

ZX-diagrams come with a set of \textit{rewrite rules} that form the ZX-calculus.
These rewrite rules let us transform a diagram into a different one while preserving the semantics
(i.e. the interpretation).
We have collected the rules in \autoref{fig:rules}.
There is also an important additional rule that can be summarized as the
\textit{only connectivity matters} rule, which states that we can deform diagrams at will without
changing their meaning, as long as we maintain the connectivity between the generators unchanged.
For a thorough explanation of each rule we refer the reader to~\cite{cd11,vilmartOptimal}.

\begin{figure}
  \centering
  \resizebox{\columnwidth}{!}{%
    \tikzfig{figures/ruleset}%
    }
    \caption{ZX-calculus ruleset.
      All rules also hold when swapping the colors of the spiders.
      In $(eu)$ we omit the calculation of the angles, which can be found
      in~\cite{vilmartOptimal}.}
  \label{fig:rules}
\end{figure}
The ZX-calculus satisfies important properties. ZX-diagrams are \textit{universal},
meaning that any linear map $f$ of the form $f: \bC^{2^n}\to\bC^{2^m}$
can be represented as a ZX-diagram.
The rewrite rules are \textit{sound}, meaning that they do not change the interpretation
of the diagram as a linear map.
They are also \textit{complete}, which ensures that if two diagrams have the same interpretation,
the ruleset is powerful enough to always let us transform one diagram into the other.
These properties ensure that the ZX-calculus can be used as a tool for reasoning about
quantum computing, as it has been already demonstrated in tasks such as quantum circuit
optimisation~\cite{tReduction}, verification of quantum circuits~\cite{Peham2022},
simulation~\cite{simulation}, and as a reasoning tool~\cite{deBeaudrap2020,kissinger2022phasefree}.

In~\eqref{eq:zxExample} we have example one- and two-qubit gates as ZX-diagrams.
We also see the computational basis $\{|0\rangle,|1\rangle\}$ and Hadamard basis
$\{|+\rangle,|-\rangle\}$ states.

\begin{equation}\label{eq:zxExample}
  \scalebox{0.8}{\tikzfig{figures/zxExample}}
\end{equation}

\subsection{Enriching the Categories $\qbit$ and $\cpm$}
\label{sec:eqbit}

Our motivation is to highlight certain types of relevant physical phenomena
(probabilistic processes) that are present in quantum systems within our categories.
It is then natural to use the Distribution monad $\md$ together with the construction
explained in the previous sections to enrich our categories for quantum reasoning.

Indeed, we take $\cpm$ and perform a free enrichment over $\md$.
What we get is the category $(F_*\cpm, \etens, I)$ consisting of the same objects as $\cpm$
and morphisms (incl. identity) for objects $A,B$ the free algebras over $\md$ of the
hom-set $\cpm(A,B)$.
Composites of morphisms are the free algebra over the composite in $\cpm$,
and the SMC structure is preserved thanks to \Cref{co:free}.

We also define the non-freely enriched category $\ecpm$ so we can
interpret probability distributions as CP maps.
For this, we define SMC-structure in the non-free $(\emd,\otimes^\md,\md(\timesUnit))$,
with a tensor product of algebras
defined by the coequalizer~\eqref{eq:coeq}, with a more detailed description
in \Cref{appMonoidalAlg}.
The category $(\ecpm,\odot,\odotUnit)$ has the same objects as $\cpm$ and for every pair of
objects $A,B$ the hom-object is an algebra $(\cpm(A,B),\alpha)$
with the $\md$-action turning a formal convex sum of linear maps into an actual sum by scalar
multiplication and addition.
Composition of hom-objects follows from composition in $\cpm$, and for an object $A$
the identity element is $j_A:(\timesUnit,\alpha)\to(\operatorname{Id}_A,\alpha)$.
We define now its symmetric monoidal structure following the definition of enriched SMC from
the beginning of \Cref{sec:v-monoidal}.
The tensor product $\odot$ on objects is the same as in $\cpm$, and on hom-objects it is
the tensor product in $\cpm$ to the underlying sets:
$\odot: (\cpm(A,A'),\alpha)\otimes^\md (\cpm(B,B'),\alpha)$ $\to$
$ (\cpm(A\otimes B, A'\otimes B'),\alpha)$.
The unit $U$ is the one in $\cpm$.
The associator, unitors, and symmetry all follow from applying the ones in $\cpm$.

In the following sections, we will interpret ZX-diagrams into $F_*\cpm$
as probability distributions of CP maps. From there, to interpret
probability distributions as CP maps, we define the functor
$\llangle\cdot\rrangle: F_*\cpm\to\ecpm$ that sends objects to themselves and applies
the monad algebra to hom-objects i.e. we ``evaluate'' a probability distribution
over CP maps by multiplying the probabilities with the corresponding map and then
adding all maps together.

Technically, we can also enrich $\qbit$ in the same way as we did with $\cpm$,
but density matrices and CP maps are the more sensible choices to talk about
probabilistic mixtures of operations.
On the other hand, enriching $\qbit$ (or $\cpm$) over algebras of the multiset monad $\mm$
leads to an enrichment over commutative monoids that exposes
addition of linear maps~\cite{heunenvicary}.
This was recently formulated in~\cite{muuss-thesis,stollenwerk-diagrammatic-2022}
as a way to ``split'' parameterised Pauli rotation
gates in ZX-calculus in such a way that the parameter relocates from its place inside the spider as a phase
to a scalar on a wire using the identity
$e^{i\alpha P} = \cos\alpha I + i \sin\alpha P$ for $P$ a Pauli matrix (or any matrix satisfying
$P^2=I$).

\subsection{Enriched ZX-diagrams and Their Interpretation}

In the same manner as the ZX-calculus is a language for reasoning in $\qbit$,
we can create a graphical language with extra structure to reason in our enriched categories.
Since we are going to be enriching $\cpm$, we first need to see how to turn the ZX-calculus
into a graphical language for CP maps.
This is done straightforwardly by adding a \emph{discard} operation $\gnd$
to the list of generators of \autoref{fig:generators} plus additional rewrite rules
(that we choose to omit here) stating that isometries can be discarded~\cite{caretteGround}.
The interpretation  of a ZX-diagram $D$ as a CP map is then a superoperator
$\rho \to \llbracket D \rrbracket\ \rho\ \llbracket D \rrbracket^\dagger$,
for $\llbracket D \rrbracket$ the standard interpretation of $D$
as in \autoref{fig:zxInterpretation}~\cite{borgna-hybrid-2021}.

We then construct an enriched graphical language for $\ecpm$ by building on top of
the ZX-calculus for CP maps. The notation will be similar to the running examples
we have given throughout the text
(cf.~\ref{fig:probabilistic-mix},\ref{eq:example1},\ref{eq:example2}).
The main idea is as follows.
We take the generators of the ZX-calculus and allow them to be freely wrapped between
opening $\scalebox{0.8}{\tikzfig{figures/openingbracket}}$
and closing $\scalebox{0.8}{\tikzfig{figures/closingbracket}}$
\emph{distribution brackets}.

Intuitively, we interpret diagrams that are within distribution brackets as a probabilistic
mixture of operations: diagrams placed side by side correspond to different probabilistic
choices with some weight attached to the corresponding wires.
Within each choice, sequential (and as we will see later, parallel) composition is allowed.
The main difference to the usual graphical languages for monoidal categories
is that the parallel composition of each choice does not correspond
to the tensor product.
In a way, we also subsume ZX-diagrams by drawing diagrams that are not enclosed by distribution
brackets, which are then interpreted as an operation that occurs with probability $1$.

For example, we can represent the single-qubit \emph{depolarizing channel}~\cite{nc00}
$\Phi:\rho\mapsto (1-p)\rho + \frac{p}{3}(X\rho X + Y\rho Y + Z\rho Z)$
that leaves a quantum state $\rho$ unchanged with probability $(1-p)$ or applies
an $X,Y$ or $Z$ error with probability $\frac{p}{3}$ each with the diagram on the left in
\Cref{fig:zx-examples}.

\begin{figure}
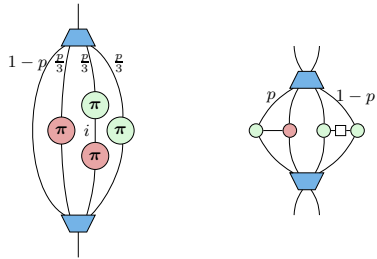

  \centering
  \scalebox{0.9}{%
    \tikzfig{figures/zx-examples}%
    }
    \caption{\textbf{Left:} Diagrammatic representation of the depolarizing channel.\\
      \textbf{Right:} Diagrammatic representation of a mixture of two-qubit gates.}
  \label{fig:zx-examples}
\end{figure}

We need to take extra care when handling scalars inside the brackets.
Indeed, what we have inside distribution brackets is a formal convex sum of ZX-diagrams
(or, in the general case, string diagrams),
meaning that the SMC rewriting axioms apply to each summand independently.
Since summands are also juxtaposed, it might seem like this notation allows for the transfer
of scalars from one summand to another.
The crux is that, since what is enclosed by trapezoids is a formal sum, we cannot drag
scalars from one summand to another using those same monoidal category axioms.
This means that we can consider the probabilities
(and any scalar factor if present, such as the imaginary unit in~\Cref{fig:zx-examples})
to be bound to the wires themselves, and only interact with the ZX-diagrams
(or generally string diagrams) that belong to that summand.
An alternative is to encapsulate each summand in ``bubbles'' for stronger visual separation~\cite{stollenwerk-diagrammatic-2022,muuss-thesis}.

A probabilistic mixture of operations with multiple inputs or outputs looks similar to
the $1$-to-$1$ case, with the caveat that we need to be more careful in the positioning of the
wires as to distinguish between tensor product and probabilistic choice.\footnote{One could use \emph{scalable notation}~\cite{carette-szx-calculus}
  to allow wires to be multi-qubit quantum registers.
  This would help with the distinction when diagrams are larger in practice. }
For example, if we want to represent applying the CNOT gate and the $CZ$ gate with probabilities
$p$ and $1-p$ respectively we would get the diagram on the right in~\Cref{fig:zx-examples}.

We can consider this extra notation as the result of a free enrichment of $\zx$ over $\emd$
giving us the category of enriched ZX-diagrams $F_*\zx$.
We can then define the interpretation $\llbracket\cdot\rrbracket_\md$ of an $\emd$-enriched
diagram as a monoidal functor from the category of enriched ZX-diagrams to $\ecpm$.
This functor factors through $F_*\cpm$ as follows:
\[\begin{tikzcd}[sep=small]
    {F_*\zx} & {F_*\cpm} \\
    & \ecpm
    \arrow["{\llbracket \cdot\rrbracket_\md}"', from=1-1, to=2-2]
    \arrow["{\llangle \cdot\rrangle}", from=1-2, to=2-2]
    \arrow["{\llangle \cdot\rrangle_*}", from=1-1, to=1-2]
  \end{tikzcd}\]

Where $\llangle \cdot\rrangle_*$ interprets an enriched ZX-diagram as a probabilistic
mixture of operations which is then evaluated by $\llangle \cdot\rrangle$ as explained
in \Cref{sec:eqbit}.
An example of the interpretation of an arbitrary distribution of ZX-diagrams of arbitrary
size can be seen in~\eqref{eq:interpretation}.
When using the multiset monad $\mm$ instead the interpretation $\llbracket \cdot\rrbracket_\mm$
is similar.

\begin{equation}\label{eq:interpretation}
  \scalebox{0.85}{\tikzfig{figures/interpretation}}
\end{equation}

Enriched ZX-diagrams are universal, that is, any morphism in $\ecpm$ can be represented
by an enriched ZX-diagram.
Indeed, since $\ecpm$ is still made of CP maps between Hilbert spaces,
we can use universality of the ZX-calculus alone to represent any morphism in $\ecpm$.

\subsection{Additional Rules for Enriched ZX-diagrams}

With the new notation we can have new rewrite rules too, some of which were already
introduced in~\cite{stollenwerk-diagrammatic-2022,muuss-thesis}
(\textbf{(es)}, \textbf{(ep)}, \textbf{(ec)}, and \textbf{(e$\delta$)}) for the case of
linear combinations.
We will display them here, including additional rules.
The ruleset of the enriched ZX-calculus for the distribution and multiset monads is the same as
the one for ZX-calculus plus additional rules that capture the interaction between sums,
products, tensor products, and scalars.
We can see the additional rules arising from the enrichment in~\Cref{fig:enriched-rules}, which
intuitively state:
\begin{figure}
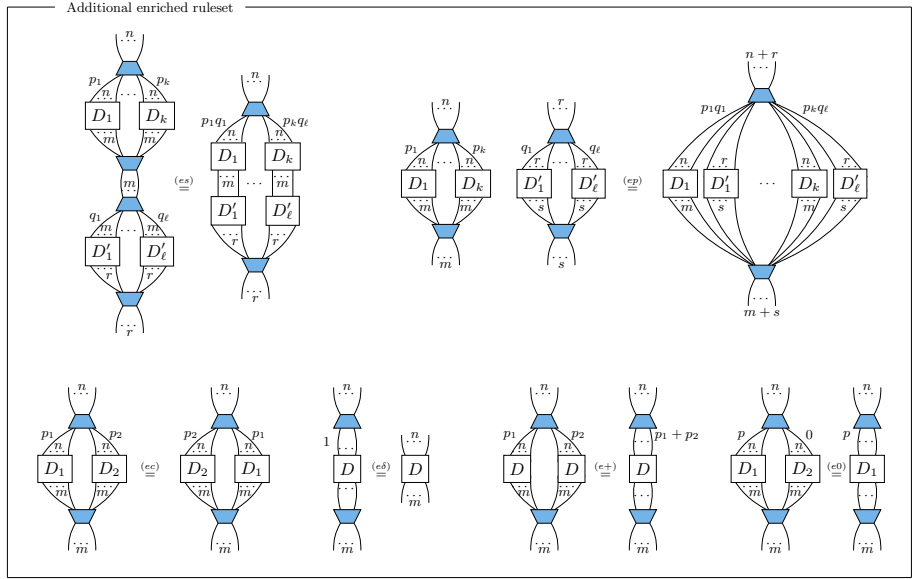

  \centering
  \resizebox{\columnwidth}{!}{%
    \tikzfig{figures/enrichedRules}%
    }
    \caption{Additional rules for the enriched ZX-calculus, alongside the ones of \Cref{fig:rules}.
    Diagrams $D, D'$ are arbitrary ZX-diagrams and weights $p,q$ are probabilities.}
  \label{fig:enriched-rules}
\end{figure}
\begin{itemize}
\item \textbf{(es):} The enriched sequential composition rule shows how to sequentially
  compose distributions. Intuitively this rule follows from products distributing over addition.
\item \textbf{(ep):} The enriched parallel composition rule is the same as \textbf{(es)}, but
  for parallel composition instead of sequential.
\item \textbf{(ec):} The enriched commutativity rule shows that bracketed diagrams are invariant
  under permutation of the branches.
\item \textbf{(e$\delta$):} The enriched Dirac delta distribution rule provides a shorthand for the trivial Dirac delta distribution.
\item \textbf{(e$+$):} The enriched addition rule shows that we can remove a branch if it is identical
  to some other by adding the probabilities.
\item \textbf{(e$0$):} The enriched $0$-probability rule allows us to remove branches with $0$ probability .
\end{itemize}

Rules \textbf{(es)},\textbf{(ep)},\textbf{(ec)} and \textbf{(e$\delta$)} were proven to be sound
in~\cite{muuss-thesis} but in the context of linear combinations of diagrams interpreted in $\qbit$.
We show that these rules still hold as an enrichment in $\emd$ and interpreted in $\ecpm$
in~\Cref{sec:appendixSoundness}.
Finding a complete ruleset (i.e. one that can show $D_1 = D_2$ whenever
$\llbracket D_1 \rrbracket_\md = \llbracket D_2 \rrbracket_\md$) for enriched diagrams remains
to be done.
A possible direction to tackle this problem would be to translate enriched diagrams into
ZXW~\cite{shaikh-how-2022} diagrams, which is a complete diagrammatic language with a
\emph{W-spider} that can encode addition of phases.
Another alternative would be to translate into the controlled form of~\cite{jeandel-addition-2023}.

We conclude with a demonstration of how we can use this extension of the ZX-calculus
to study the effectiveness of Quantum Error Mitigation (QEM) techniques for different noise models.
Quantum Error Mitigation~\cite{cai-quantum-2023} are the series of techniques that are used to reduce the
effects of noise in near-term quantum systems.
One such technique is \emph{Symmetry Verification}~\cite{bonet-monroig-2018},
which states that given a Hamiltonian
(Hermitian operator that determines the evolution of a system) $\hat{H}$,
and a symmetry $S$ (an operator that commutes with $\hat{H}$ i.e. $[\hat{H},S] = \hat{H}S - S\hat{H} = 0$),
one can perform measurements of $S$ to verify if the state that (ideally) evolves under $\hat{H}$
was affected by errors.
Indeed, under the assumption that the initial state is a $(+1)$ eigenvector of $S$, then it will
stay that way under ideal evolution under $\hat{H}$.
This implies that if there is an error $E$ that anti-commutes with $S$ (i.e. $\{E,S\} = ES + SE = 0$)
at some point in the computation, we can measure $S$ to detect a change in the eigenvalue.
Symmetry verification then proposes to perform a postselection on the result $(+1)$,
meaning that we discard computations that give a $(-1)$ outcome when measuring $S$.

Given a noisy state $\rho_{\text{noisy}}$ and a symmetry $S$, the probability of outcome $(+1)$
when measuring $S$ is given by $p(+1)=\tr(P_{+1}\rho_{noisy})$, for $P_{+1} = \frac{I+S}{2}$ the projector
onto the $(+1)$ eigenspace of $S$ and $\tr$ the trace operator.
This value tells us then with which probability the measurement ``accepts'' a noisy state, and can
be used to compare the effectiveness of different choices of $S$ given a certain noise
model~\cite{Kakkar-2022}.
Let us consider $\rho_{\text{noisy}} = \Phi(U\ket{0})$ for $\Phi$ the depolarizing noise channel
and $U$ some single-qubit unitary -- in other words, we have a single layer of depolarizing
noise at the end of our computation.
For simplicity, let us further assume that our state before the depolarizing channel
is the $(+1)$ eigenvector of some Pauli operator e.g. the Pauli $X$, then we have $U=H$
(the Hadamard gate) and we can draw $p(+1)=\tr(\frac{I+S}{2}\rho_{noisy})$ diagrammatically
(up to scalar factor, see~\Cref{app:SV}) as the following diagram:

\begin{equation*}\label{eq:exampleSV}
  \scalebox{0.85}{\tikzfig{figures/exampleSV}}
\end{equation*}

From top to bottom, the diagram represents applying $H$ to the $\ket{0}$ state, followed by
a depolarizing noise channel and the verification of $X$ in the form of a CNOT gate controlled
on an auxiliary qubit.
The auxiliary qubit on the right starts in the $\ket{0}$ state and has a Hadamard gate applied
to it before and after the CNOT.
It is then postselected into $\bra{0}$, which is the corresponding state for the $(+1)$
outcome.
The last operation in the form of $\gnd$ corresponds to the trace.
The full diagrammatic calculation is in~\Cref{app:SV}.
With similar diagrams, we can study diagrammatically how well different QEM techniques mitigate
certain noise models, and apply them to representations of quantum algorithms that, for example,
have one layer of errors for every time step.


\section{Discussion and Future Work}
\label{sec:discussion}
In this work, we have shown how to construct freely enriched symmetric monoidal categories
over the algebras of a monoidal monad on $\SetC$ that satisfies $F \dashv \emt(I,-)$
for $F$ the free $T$-algebra functor and $I$ the unit of the monoidal structure, which is
the case in particular when $T$ is an affine monad.
We have then taken this construction and developed a graphical language that captures the
additional algebraic structure of the morphisms for the case of the Distribution monad.
We then show how we can use this to study classical probabilistic processes in quantum systems,
a highly relevant type of operation for near-term quantum applications.
In particular, we extend the ZX-calculus to make it a language for reasoning in
an enriched version of $\cpm$.

We believe that this work opens several directions for future research.
The most evident one is to prove \emph{completeness} of the enriched diagrams,
which in turn would facilitate automated implementations for tasks such as simulation
of noisy quantum systems, fine-tuned quantum circuit optimization techniques for
specific quantum devices, or comparison of the effectiveness of different Quantum
Error Mitigation techniques.
An interesting venue would be to use enrichment over $\mm$ to reason about
\emph{quantum circuit pre- and post-processing} techniques,
such as \emph{circuit cutting}~\cite{reducechop-2023,Peng-2020},
in which quantum circuits are ``split'' into linear combinations of smaller
ones that are executed separately.
We also believe that it could be possible to integrate monads that capture quantum behaviours
into our construction to represent in
enriched ZX-diagrams phenomena such as \emph{superposition of execution orders},
like what is done in the Many-Worlds calculus~\cite{chardonnet2022manyworlds}.

Strongly related to completeness is to have presentations of the diagrams in terms of generators
and equations.
We achieved this by hand in \Cref{sec:zx} by using that the algebras for the distribution monad
can be presented as convex algebras with a family of operations $+_{p}$.
The question is then what the analogue of convex monads is when using algebras presented by
Lawvere theories or sketches~\cite{HP07:CategoryTheoreticUnderstanding,Manes76:AlgebraicTheories}.

We are also interested in finding other monads that could capture interesting processes
outside of the quantum realm.
For example, the \emph{non-empty powerset monad} could be used to encode
non-deterministic operations and be used for reasoning about a third party operating on a shared
quantum system.

\subsubsection*{Acknowledgements}
This work was funded by the European Union under Grant Agreement 101080142, EQUALITY project.
AV was partly supported by project PRG 946 funded by the Estonian Research Council.
The authors would like to thank anonymous reviewers for pointing out
reference~\cite{banaschewskiNelson1976} regarding the tensor product of convex algebras. 


\nocite{*}
\bibliographystyle{splncs04}
\bibliography{generic}

\appendix
\section{Proof of~\Cref{thm:pseudo-monoid-adjunction}}\label{appProofThm2}

First, we write down the diagrams for the natural isomorphisms of the pseudo-monoid that is the
image of $(\eC, \odot, \odotUnit, \alpha, \lambda, \rho, \sigma)$ under $\PMon(G)$, following the
diagrams of~\cite[Section 3]{DS97:MonoidalBicategoriesHopf}.

We have that the associator under $\PMon(G)$ is $\PMon(G)(\alpha)=G\alpha$, which corresponds
to the natural isomorphism between the outer layers of the
following diagram, for $A\in \obj{\eC}$:

\[\begin{tikzcd}
	{GA \times GA \times GA} && {GA \times G(A\otimes A)} && {GA \times GA} \\
	{G(A\otimes A) \times GA  } && {G(A\otimes A\otimes A)} && {G(A\otimes A)} \\
	{GA \times GA} && {G(A\otimes A)} && GA
	\arrow["{G_{\text{Id}_A}\times G^2}", from=1-1, to=1-3]
	\arrow["{G_{\text{Id}_A}\times G(\odot)}", from=1-3, to=1-5]
	\arrow["{G^2\times G_{\text{Id}_A}}"', from=1-1, to=2-1]
	\arrow["{G^2}", from=1-5, to=2-5]
	\arrow["{G^2}"', from=2-1, to=2-3]
	\arrow["{G^2}", from=1-3, to=2-3]
	\arrow["{G(\text{Id}_A\otimes \odot)}", from=2-3, to=2-5]
	\arrow[""{name=0, anchor=center, inner sep=0}, "{G(\odot \otimes \text{Id}_A)}"',
        from=2-3, to=3-3]
	\arrow["{G(\odot)}"', from=3-3, to=3-5]
	\arrow[""{name=1, anchor=center, inner sep=0}, "{G(\odot)}", from=2-5, to=3-5]
	\arrow["{G^2}"', from=3-1, to=3-3]
	\arrow["{G(\odot)\times G_{\text{Id}_A}}"', from=2-1, to=3-1]
	\arrow["G\alpha"', shorten <=37pt, shorten >=37pt, Rightarrow, from=0, to=1]
\end{tikzcd}\]

The inner squares commute from associativity and naturality of $G^2$.

From there we also see that we can define the multiplication and unit functors under
$\PMon(G)$ to be $\PMon(G)(\odot):= G(\odot)\circ G^2$ and $\PMon(G)(U):= G(U)\circ G^0$.

Similarly, the left unitor under $\PMon(G)$ is $\PMon(G)(\lambda)=G\lambda$ by following the diagram

\[\begin{tikzcd}
    GA && GA \\
    {* \times GA} \\
    {GI \times GA} && {G(I\otimes A)} \\
    {GA \times GA} && {G(A\otimes A)} & GA
    \arrow["\cong"', from=1-1, to=2-1]
    \arrow["{G^0 \times G_{\text{Id}_A}}"', from=2-1, to=3-1]
    \arrow["{G(U) \times G_{\text{Id}_A}}"', from=3-1, to=4-1]
    \arrow["{G^2}"', from=3-1, to=3-3]
    \arrow["{G(U\otimes \text{Id}_A)}"', from=3-3, to=4-3]
    \arrow["{G^2}"', from=4-1, to=4-3]
    \arrow["{G(\odot)}"', from=4-3, to=4-4]
    \arrow[shift left, no head, from=1-1, to=1-3]
    \arrow[no head, from=1-1, to=1-3]
    \arrow["{G(\cong)}"', from=1-3, to=3-3]
    \arrow[""{name=0, anchor=center, inner sep=0}, "{G_{\text{Id}_A}}", curve={height=-12pt},
    from=1-3, to=4-4]
    \arrow["G\lambda", shorten >=4pt, Rightarrow, from=3-3, to=0]
  \end{tikzcd}\]

Where the top and bottom squares commute from coherence and naturality of $G^2$, respectively.
We can find $\PMon(G)(\rho)=G\rho$ in the same way.
The braiding $\sigma$ gets mapped to $\PMon(G)(\sigma)=G\sigma$:

\[\begin{tikzcd}
	{GA \times GA} && {GA \times GA} \\
	{G(A \otimes A)} && {G(A \otimes A)} \\
	& GA
	\arrow["{S_\times}", from=1-1, to=1-3]
	\arrow["{G^2}"', from=1-1, to=2-1]
	\arrow["{G^2}", from=1-3, to=2-3]
	\arrow["{G(S_\otimes)}", from=2-1, to=2-3]
	\arrow["{G(\otimes)}"', from=2-1, to=3-2]
	\arrow["{G(\otimes)}", from=2-3, to=3-2]
	\arrow["G\sigma"', shift right=5, shorten <=18pt, shorten >=18pt, Rightarrow,
        from=2-1, to=2-3]
\end{tikzcd}\]

where $S_\otimes, S_\times$ are the braidings in $(\VCat, \otimes, \otimesUnit)$ and
$(\WCat, \times, \timesUnit)$, respectively.
It follows that $\PMon(G)(\sigma)$ is symmetric if $S_\otimes, S_\times$ are symmetric.

$\PMon(G)$ maps pseudo-monoid homomorphisms
$(h, h^{0}, h^{2}): (\eC, \odot_{1}, U_{1}) \to (\eD, \odot_{2}, U_{2})$ to
$$(G(h), G(h^{0}), G(h^{2})): (G\eC, G(\odot_{1}), G(U_{1})) \to (G\eD, G(\odot_{2}), G(U_{2}))$$
as seen from the following unit and multiplication diagrams, for $A\in \obj{\eC}, B\in \obj{\eD}$:

\[\begin{tikzcd}
    {*} & {*} &&&& {GA \times GA} && {GB \times GB} \\
    GI & GI &&&& {G(A\otimes A)} && {G(B\otimes B)} \\
    GB & GA &&&& GA && GB
    \arrow["{G^0}"', from=1-1, to=2-1]
    \arrow[""{name=0, anchor=center, inner sep=0}, "{G(U_2)}"', from=2-1, to=3-1]
    \arrow["{G^0}", from=1-2, to=2-2]
    \arrow[no head, from=1-1, to=1-2]
    \arrow[shift right, no head, from=1-1, to=1-2]
    \arrow[shift right, no head, from=2-1, to=2-2]
    \arrow[no head, from=2-1, to=2-2]
    \arrow[""{name=1, anchor=center, inner sep=0}, "{G(U_1)}", from=2-2, to=3-2]
    \arrow["{G(h)}", from=3-2, to=3-1]
    \arrow["{G^2}"', from=1-6, to=2-6]
    \arrow[""{name=2, anchor=center, inner sep=0}, "{G(\odot_1)}"', from=2-6, to=3-6]
    \arrow["{G(h) \times G(h)}", from=1-6, to=1-8]
    \arrow["{G^2}", from=1-8, to=2-8]
    \arrow["{G(h \otimes h)}", from=2-6, to=2-8]
    \arrow["{G(h)}"', from=3-6, to=3-8]
    \arrow[""{name=3, anchor=center, inner sep=0}, "{G(\odot_2)}", from=2-8, to=3-8]
    \arrow["{G(h^0)}"', shorten <=10pt, shorten >=10pt, Rightarrow, from=0, to=1]
    \arrow["{G(h^2)}"', shorten <=25pt, shorten >=25pt, Rightarrow, from=2, to=3]
  \end{tikzcd}\]

Lastly, we show that if $G$ has a left adjoint $F$, then we also have $\PMon(F) \dashv \PMon(G)$.
Let
$(\eC, \odot_{1}, U_{1})\in \PMon(\WC, \times, \timesUnit),
(\eD, \odot_{2}, U_{2}) \in \PMon(\VC, \otimes, \otimesUnit)$,
and let $A\in\obj{\eC},B\in\obj{\eD}$.
To show this, we show that the hom-adjunction $\alpha_{A,B}: \VC(FA,B)\cong \WC(A,GB):\beta_{A,B}$
induces
$$\tilde{\alpha}_{A,B}: \PMon(\VC)(\PMon(F)A,B) \cong \PMon(\WC)(A,\PMon(G)B):\tilde{\beta}_{A,B}.$$

A pseudo-monoid homomorphism $(h, h^{0}, h^{2})\in \PMon(\VC)(\PMon(F)A,B)$ has types
\begin{equation}
  \label{eq:PMhomThm2Proof}
  \begin{split}
    & h: FA\to B,\\
    & h^{0}: U_{2}\to h \circ F(U_{1}) \circ F^{0},\\
    & h^{2}: \odot_{2} \circ (h \otimes h) \to h  \circ F(\odot_{1}) \circ F^{2}.
  \end{split}
\end{equation}

In the same way, the image of $(h, h^{0}, h^{2})$ under $\tilde{\alpha}$ has types
\begin{equation}
  \label{eq:PMhomAdjThm2Proof}
  \begin{split}
    & \tilde{\alpha}(h): A\to GB,\\
    & \tilde{\alpha}(h^{0}): G(U_2) \circ G^{0} \to \tilde{\alpha}(h) \circ U_1,\\
    & \tilde{\alpha}(h^{2}): G(\odot_2) \circ G^{2} \circ (\tilde{\alpha}(h) \times
      \tilde{\alpha}(h)) \to \tilde{\alpha}(h) \circ \odot_1.
  \end{split}
\end{equation}

We can set $\tilde{\alpha}(h)=\alpha(h)$.
Whiskering (depicted here with a bullet $\bullet$) $h^0$
from~\Cref{eq:PMhomThm2Proof} with $(F^{0})^{-1}$ and applying $\alpha$ to it yields
\begin{equation}
  \label{eq:AdjunctH0Thm2Proof}
  \alpha( h^{0} \bullet  (F^{0})^{-1}): \alpha(U_2 \circ (F^{0})^{-1}) \to \alpha(h \circ F(U_{1})
  \circ F^0 \circ (F^0)^{-1}).
\end{equation}

Similarly, whiskering $h^{2}$ with $(F^2)^{-1}$ and applying $\alpha$ to it gives us
\begin{equation}
  \label{eq:AdjunctH2Thm2Proof}
  \alpha( h^{2} \bullet  (F^2)^{-1}):
  \alpha(\odot_{2} \circ (h \otimes h) \circ (F^2)^{-1}) \to \alpha(h  \circ F(\odot_{1})
  \circ F^2 \circ (F^2)^{-1}  ).
\end{equation}

If we have
\begin{equation}\label{eq:ReqThm2Proof}
  (F^{0})^{-1} = \beta(G^{0})
  \quad \text{and} \quad
  (F^{2})^{-1} = \beta(G^{2}) \comp F(\eta \times \eta),
\end{equation}
then we can simplify both sides of~\Cref{eq:AdjunctH0Thm2Proof} to 
$$\alpha(U_2 \circ (F^{0})^{-1}) = G(U_2) \circ \alpha((F^{0})^{-1}) = G(U_2) \circ G^{0} $$
and $\alpha(h \circ F(U_{1})) = \alpha(h) \circ U_{1}$ using naturality of $\alpha$
and~\Cref{eq:ReqThm2Proof}.

Similarly for~\Cref{eq:AdjunctH2Thm2Proof} we get
\begin{equation}
  \label{eq:AlphaH2Thm2Proof}
  \begin{split}
    \alpha(\odot_{2} \circ (h \otimes h) \circ (F^2)^{-1}) & = G(\odot_{2})
    \circ G(h \otimes h) \circ G^2 \circ (\eta \times \eta) \\
    & = G(\odot_{2}) \circ G^2 \circ (Gh \times Gh) \circ (\eta \times \eta) \\
    & = G(\odot_{2}) \circ G^2 \circ (Gh\circ \eta \times Gh\circ \eta) \\
    & = G(\odot_{2}) \circ G^2 \circ (\alpha(h) \times \alpha(h))
  \end{split}
\end{equation}
and $\alpha(h  \circ F(\odot_{1})) = \alpha(h) \circ \odot_{1}$ by naturality of $\alpha$
and \Cref{eq:ReqThm2Proof}.

This shows us that defining
$\tilde{\alpha}(h) := \alpha(h), \tilde{\alpha}(h^{0}) := \alpha( h^{0} \bullet  (F^{0})^{-1})$
and $ \tilde{\alpha}(h^{2}) := \alpha( h^{2} \bullet  (F^2)^{-1})$
gives us the necessary natural transformation.

We can do the converse for $\tilde{\beta}$.
This time the pseudo-monoid homomorphism $(h, h^{0}, h^{2})\in \PMon(\WC)(A,\PMon(G)B)$ has types
\begin{equation}
  \begin{split}
    & h: A\to GB,\\
    & h^{0}: G(U_2) \circ G^{0} \to h \circ U_1\\
    & h^{2}: G(\odot_2) \circ G^{2} \circ (h \times h) \to h \circ \odot_1.
  \end{split}
\end{equation}
Setting $\tilde{\beta}(h) := \beta(h),$ we can define
$$
  \tilde{\beta}(h^0) := \beta(h^0 \bullet F^0):
\beta(G(U_2)\circ G^0)\circ F^0 \to \beta(h\circ U_1) \circ F^0, 
$$
where
$$
\beta(G(U_2)\circ G^0)\circ F^0 = U_2,\quad \quad
\beta(h\circ U_1) \circ F^0 = \tilde{\beta}(h)\circ F(U_1)\circ F^0
$$
following similar steps as what we did for the derivation of $\tilde{\alpha}$.
Similarly we define
$$\tilde{\beta}(h^2) := \beta(h^2 \bullet F^2):
\beta(G(\odot_2)\circ G^2 \circ (h\times h))\circ F^2  \to \beta(h\circ \odot_1)  \circ F^2$$
where
\begin{gather*}
  \beta(G(\odot_2)\circ G^2 \circ (h\times h))\circ F^2 =
  \odot_2\circ (\tilde{\beta}(h)\otimes \tilde{\beta}(h)),\\
  \beta(h\circ \odot_1)  \circ F^2= \tilde{\beta}(h)\circ F(\odot_1)\circ F^2
\end{gather*}
also following similar steps as what previously done.
These definitions of $\tilde{\alpha},\tilde{\beta}$ form an isomorphism,
which concludes the proof that the adjunction gets lifted to include monoidal structure.

\section{Monoidal Structure in $\emd$}\label{appMonoidalAlg}

The tensor product of algebras of $\md$ represents bi-convex maps
following~\cite{banaschewskiNelson1976}, which we give in detail in this section.
For algebras $a \from \Distr(A) \to A$, $b \from \Distr(B) \to B$ and $c \from \Distr(C) \to C$
we denote by $\BiConv{a}{b}{c}$ the set
\begin{equation*}
  \setDef*{f \from A \times B \to C}{
    \all{\sigma}\all{\gamma} f(a(\sigma), b(\gamma)) = c\parens*{\sum_{x,y} \sigma(x)\gamma(y)[f(x,y)]}}
  \tag{$\ast$}
\end{equation*}
of bi-convex maps.
One can show that the coequalizer~\eqref{eq:coeq} for $T = \md$ is given universally by the property
\begin{equation*}
  \emd(a \otimes b, c) \cong \BiConv{a}{b}{c}
\end{equation*}
as follows.
First of all, by definition of the equalizer and by $\Distr$ being a monad, we have the following
natural isomorphisms.
\begin{align*}
  \emd(a \otimes b, c)
  & \cong
    \setDef{g \in \emd(F(A \times B),c)}{
    g \comp \Distr(a \times b) = g \comp \mu \comp \Distr\nabla} \\
  & \cong
    \setDef{f \from A \times B \to C}{
    c \comp \Distr f \comp \Distr(a \times b) = c \comp \Distr f \comp \mu \comp \Distr\nabla}
    \tag{$\dagger$}
\end{align*}
So it remains to show that the two conditions ($\ast$) and ($\dagger$) on maps
$f \from A \times B \to C$ are equivalent.
A calculation shows for $\rho \in \Distr(\Distr(A) \times \Distr(B))$ that 
\begin{equation*}
  (\mu \comp \Distr\nabla)(\rho) = \sum_{x,y}\sum_{\sigma, \gamma}\rho(\sigma, \gamma) \sigma(x) \gamma(y) [(x,y)]
\end{equation*}
and
\begin{equation*}
  \Distr(a \times b)(\rho) = \sum_{\sigma, \gamma} \rho(\sigma, \gamma) [(a(\sigma), b(\gamma))] \, .
\end{equation*}
Since, for $\rho = \eta(\sigma, \gamma)$ we have
\begin{equation*}
  (\mu \comp \Distr\nabla)(\rho) = \sum_{x,y}\sigma(x) \gamma(y) [(x,y)]
\end{equation*}
and
\begin{equation*}
  \Distr(a \times b)(\rho) = \eta(a(\sigma), b(\gamma)) \, ,
\end{equation*}
it is immediately clear that condition ($\dagger$) implies ($\ast$).
The other way around, since $c$ is an algebra and $\mu$ a natural transformation, we have that
\begin{equation*}
  c \comp \Distr f \comp \mu \comp \Distr\nabla = c \comp \Distr c \comp \Distr^{2} f \comp \Distr\nabla
\end{equation*}
By functoriality of $\Distr$, the condition ($\dagger$) thus holds if
$f \comp (a \times b) = c \comp \Distr f \comp \nabla$, which is exactly condition ($\ast$).
Hence, the two conditions imply each other and algebra morphisms out of $a \otimes b$ are the same
as bi-convex map out of $A \times B$.

\section{Soundness of New Ruleset}\label{sec:appendixSoundness}
We show that the interpretation $\llbracket \cdot \rrbracket_\md$ of LHS and RHS are the same:
\begin{itemize}
\item \textbf{(es):}
  \begin{equation*}
    \begin{split}
      \text{LHS: } \llangle \left( \sum_{j=1}^\ell q_j [D_j']\right) \circ \left(\sum_{i=1}^k p_i [D_i]\right)  \rrangle & =
                                                                                                            \llangle \sum_{j=1}^\ell q_j [D_j']\rrangle \circ \llangle\sum_{i=1}^k p_i [D_i]  \rrangle\\
                                                                                                          & = \left(\sum_{j=1}^\ell q_j \llbracket D_j'\rrbracket\right) \circ \left(\sum_{i=1}^k p_i \llbracket D_i\rrbracket\right)\\
                                                                                                          & =  \sum_{j=1}^\ell\sum_{i=1}^k q_jp_i \llbracket D_j'\rrbracket \circ \llbracket D_i\rrbracket\\
                                                                                                          & =  \sum_{j=1}^\ell\sum_{i=1}^k q_jp_i \llbracket D_j' \circ D_i\rrbracket
    \end{split}
  \end{equation*}
  \begin{equation*}
    \begin{split}
      \text{RHS: } \llangle \sum_{j=1}^\ell\sum_{i=1}^k q_jp_i [ D_j' \circ D_i]\rrangle & =
                                                                                                   \sum_{j=1}^\ell\sum_{i=1}^k q_jp_i \llbracket D_j' \circ D_i\rrbracket
    \end{split}
  \end{equation*}
  
\item \textbf{(ep):} Same as \textbf{(es)}, but substituting composition with tensor product.

\item \textbf{(ec):}
  \begin{equation*}
      \text{LHS: } \llangle p_1 [D_1] + p_2[D_2]  \rrangle  = p_1 \llbracket D_1\rrbracket + p_2 \llbracket D_2\rrbracket
  \end{equation*}
  \begin{equation*}
      \text{RHS: } \llangle p_2 [D_2] + p_1[D_1]  \rrangle  = p_2 \llbracket D_2\rrbracket + p_1 \llbracket D_1\rrbracket
  \end{equation*}

\item \textbf{(e$\delta$):}
  \begin{equation*}
    \text{LHS: } \llangle 1 [D] \rrangle  = \llbracket D\rrbracket
  \end{equation*}
  \begin{equation*}
    \text{RHS: } \llbracket D\rrbracket
  \end{equation*}
  
\item \textbf{(e$+$):}
  \begin{equation*}
    \text{LHS: } \llangle p_1 [D] + p_2[D]  \rrangle  = p_1 \llbracket D\rrbracket + p_2 \llbracket D\rrbracket = (p_1 + p_2) \llbracket D\rrbracket
  \end{equation*}
  \begin{equation*}
    \text{RHS: } \llangle (p_1+p_2)[D]  \rrangle  = (p_1 + p_2) \llbracket D\rrbracket
  \end{equation*}

\item \textbf{(e$0$):}
  \begin{equation*}
    \text{LHS: } \llangle p [D_1] + 0[D_2]  \rrangle  = p \llbracket D_1\rrbracket + 0 \llbracket D_2\rrbracket = p \llbracket D_1\rrbracket
  \end{equation*}
  \begin{equation*}
    \text{RHS: } \llangle p [D_1]  \rrangle  = p \llbracket D_1\rrbracket
  \end{equation*}  
\end{itemize}

\section{Diagrammatic Study of Symmetry Verification}\label{app:SV}

To have the correct scalar factors in the diagram, we have to take a look
at Example~\eqref{eq:zxExample}.
This tells us the scalar factors we need to apply to the diagram in order to represent
exactly the diagrams for $\ket{0},\bra{0},$ and CNOT.
We then have the following scalar-corrected diagram:

\begin{equation*}\label{eq:scaledSV}
  \tikzfig{figures/scaledSV}
\end{equation*}

Then, applying diagrammatic rules to the diagram results in the following:

\begin{equation*}\label{eq:derivedSV}
  \tikzfig{figures/derivedSV}
\end{equation*}

We then use one rule regarding the discard operator from~\cite{caretteGround}, stating
that we can remove the dangling diagram at the bottom if we multiply the enitre diagram by $\sqrt{2}$:

\begin{equation*}\label{eq:derivedSV2}
  \tikzfig{figures/derivedSV2}
\end{equation*}

Calculating the interpretation of the last diagram amounts to calculating the interpretation of
each scalar diagram in each branch, multiply them by the probabilities and scalar factor, and
add them together.
More interestingly, we see that there are two possibilities, either a scalar $Z$ spider with phase
$0$ or with phase $\pi$, which respectively have interpretation $2$ and $0$ following~\Cref{fig:zxInterpretation}.
The interpretation of the last diagram then gives us that $\tr(P_{+1}\rho_{noisy}) = 1 - \frac{2p}{3}$.

What this is expressing is that our symmetry commutes with the first two branches
(the identity and the $X$ error) and anti-commutes with the other two, which in practice
tells us that measuring this symmetry only mitigates some part of the depolarizing channel.
Then, by representing different symmetries we can use this form of diagrammatic reasoning
to find and compare symmetries for different error channels.


\section{Notation}
\label{sec:notation}

\begin{center}
  \begin{tabular}{l|l}
    Notation & Meaning \\\hline
    $\SetC$ & Category of sets \\
    $\emt$ & Eilenberg-Moore category of a monad \\
    $\CatC$ & Category of (small) categories \\
    $\obj{\CC}$ & Objects of the category $\CC$ \\
    $\eC$, $\enr{\Cat{M}}$ & Enriched categories \\
    $\timesUnit$ & Monoidal unit of $\times$ (singleton set in $\SetC$) \\
    $\otimesUnit$ & Monoidal unit of $\otimes$ \\
    $\ecirc$ & Composition in an enriched category \\
    $\odot$ & Tensor product in a monoidal enriched category \\
    $\otimes^\mt$ & Tensor product in $\emt$ \\
    $\VCat$, $\ECatC{\Cat{M}}$ & 2-category of enriched category
  \end{tabular}
\end{center}


\end{document}